\newtheorem{lemma}{Lemma}
\newtheorem{theorem}{Theorem}
\renewcommand{\tabcolsep}{0.7mm}
\newcommand{\zipf}{\text{Zipf}}
    \patchcmd{\tnotemark}{\ding{73}}{\dag}{}{\@latex@error{Failed to path \string\tnotemark\space for \string\ding{73}}}
    \patchcmd{\tnotemark}{\ding{73}\ding{73}}{\dag\dag}{}{\@latex@error{Failed to path \string\tnotemark\space for \string\ding{73}\string\ding{73}}}
    \patchcmd{\tnotetext}{\ding{73}}{\dag}{}{\@latex@error{Failed to path \string\tnotetext\space for \string\ding{73}}}
    \patchcmd{\tnotetext}{\ding{73}\ding{73}}{\dag\dag}{}{\@latex@error{Failed to path \string\tnotetext\space for \string\ding{73}\string\ding{73}}}
\journal{Journal of \LaTeX\ Templates}
\begin{document}
\begin{frontmatter}

\title{Compressed Key Sort and Fast Index Reconstruction}


\author[SNU,SAP]{Yongsik Kwon}
\author[SNU]{Cheol Ryu}
\author[SNU]{Sang Kyun Cha}
\author[NY]{Arthur H. Lee}
\author[SNU]{Kunsoo Park}
\author[SNU]{Bongki Moon}

\address[SNU]{Seoul National University, Seoul, Korea}
\address[SAP]{SAP Labs Korea, Seoul, Korea}
\address[NY]{State University of New York Korea, Incheon, Korea}

\begin{abstract}
In this paper we propose an index key compression scheme based on the notion of distinction bits by proving that the distinction bits of index keys are sufficient information to determine the sorted order of the index keys correctly.
While the actual compression ratio may vary depending on the characteristics of datasets (an average of 2.76 to one compression ratio was observed in our experiments), the index key compression scheme leads to significant performance improvements during the reconstruction of large-scale indexes. Our index key compression can be effectively used in database replication and index recovery of modern main-memory database systems.
\begin{keyword}
Compressed key sort \sep Distinction bit \sep Index reconstruction \sep Parallel sorting
\end{keyword}
\end{abstract}
\end{frontmatter}

\section{Introduction}

Main-memory database systems have been widely used for many applications
such as OLTP and OLAP, which are required to keep the latency low and
the transaction throughput high.
In such a main-memory database system, indexes are often deployed without
on-disk representations~\cite{Zhang16hstore,sqlserver,hekaton,MMDB}.
By letting the indexes reside solely in memory,
it can sustain the best attainable performance of the indexes,
which are already critical to query and data processing performance,
even in the presence of many updates.
Insertions, deletions and updates made to a database table will be reflected
to all the indexes associated with the table as well as the table itself.
However, none of those update operations will incur any disk accesses
for keeping the indexes up to date,
because all the corresponding changes will only be applied to the table
and the associated indexes residing in the main memory.
The changes applied to the indexes will not even be written to the log
or checkpointed to disk, as the most recent copy of each index can always be
restored from its base table~\cite{malviya14}.

Since none of the index updates are propagated to disk, however, all the indexes have to be reconstructed from scratch when the database system restarts from a failure, an anti-cached table is loaded back from disk to memory, or an entire database is replicated from the master node to a slave node. For a table that has many indexes associated with it, the cost of loading the table may be significantly increased due to the additional cost of constructing the indexes from the rows of the table. Therefore, it is practically an important challenge to limit the cost of
index reconstruction so that the database loading time and the restart time can be kept to its minimum.

\begin{table}
\renewcommand{\tabcolsep}{3mm}
\begin{center}
\begin{tabular}{c|c|ccc} 
\hline
&load&\multicolumn{3}{c}{index construction time} \\
table&time&sort&build&total\\
\hline
INDBTAB & 1.24 & 4.25 & 0.34 & 4.59 \\
Human & 5.34 & 19.94 & 1.17 & 21.12 \\ 
Wikititle & 1.36 & 5.09 & 0.59 & 5.68 \\
ExURL & 0.96 & 16.82 & 0.81 & 17.63 \\
WikiURL & 1.36 & 18.56 & 0.80 & 19.36 \\
Part & 0.25 & 0.65 & 0.05 & 0.70 \\
\hline
\end{tabular}
\caption{Index construction time for sample tables (in seconds).}
\label{tbl-indexcost}
\end{center}
\end{table}

Table~\ref{tbl-indexcost} shows the times taken to build a B-tree index for
each of the six memory-resident tables. The times are broken down to
two separate stages, namely, {\em sort} and {\em build}.
Both the sort and build phases of index construction were performed by
a single-core implementation.
Just for the sake of comparison, the second column of the table
shows the times taken to load the indexed columns from disk to memory.
In all the six cases reported in Table~\ref{tbl-indexcost},
the cost of internal sort was approximately 90 percent of the total cost.
Evidently, the internal sort was the dominant factor of
the B-tree index construction, and
hence fast construction of memory resident B-tree indexes cannot be
achieved without reducing the cost of internal sort drastically.

In this paper, we propose a new sort approach relying on the distinction bits
among the keys. We call this method a {\em compressed key sort},
as utilizing only the distinction bits of the keys can be considered
a kind of compressing the keys.
The experimental evaluation demonstrates that the overall cost of
B-tree construction can be reduced by 21--54 percent for real-world datasets.
Our experiments also show that the compressed key sort is readily
parallelizable for multi-core processors, yields near-linear speedup,
and can actually build a B-tree index faster than loading its image
from disk or even enterprise-class SSD.

The key question we pose in this paper is ``What is the minimum amount of
information required (in terms of the number of bits in the index keys)
to determine the sorted order among the index keys correctly?''
Whoever can determine it will extract the minimum number of bits from
index keys and sort them still correctly but more efficiently.
Once the keys are sorted, the B-tree index will be built by
following the standard bulk-load procedure.
This process is illustrated in Figure~\ref{fig-sortkey},
where the top flow shows the conventional steps for index construction
while the bottom one shows the proposed compressed key sort applied to
index construction.

We now formally define the {\em distinction bits} of two index keys to be the most significant bits that are different between the keys. We will prove that the distinction bits of index keys are sufficient information to determine the sorted order of the index keys correctly. Consequently, we can extract only the distinction bits of index keys into compressed form and sort the compressed keys in order to construct a B-tree
index quickly.

Index keys for database tables can be as short as a 4-byte word but they can also be longer than a few dozen bytes in business applications. Hence index trees and all related algorithms (sorting index keys, building the index, searching with a query key, etc.) should be able to handle long keys as well as short keys. Our compressed key sort approach assumes this wide range of index key sizes. 
To speed up the index construction process further, we exploit parallelism in building an index tree. 

\begin{figure}
\centering
  \includegraphics[height=6.8cm]{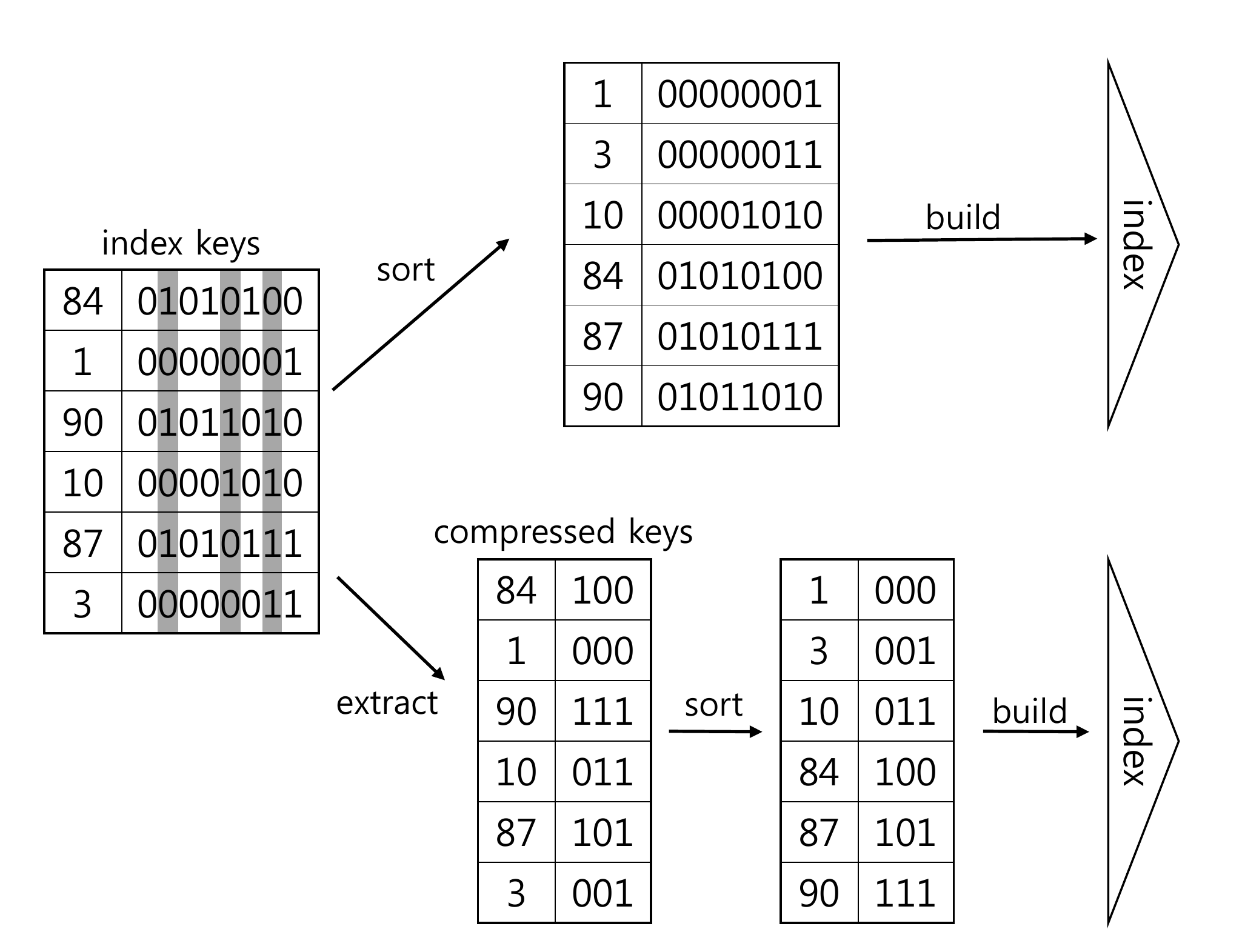}
\caption{Compressed key sort.}
\label{fig-sortkey} 
\end{figure}

This paper is organized as follows. In Section 2 we discuss related work. In Section 3 we introduce our compressed key sort. In Section 4 we describe the index key formats we use for various data types and present the metadata information to keep for efficient index rebuilding. In Section 5 we explain the procedure of rebuilding the index. Section 6 shows the results of our experiments, and we conclude in Section 7.

\section{Related Work}
There has been extensive research on efficient index structures for database tables, where efficiency measures are index size, search time, concurrency control, etc. Especially the following work focused on reducing index sizes and/or search time: Bayer and Unterauer's Prefix B-tree \cite{BU}, Lehman and Carey's T-tree \cite{LC}, Ferguson's Bit-tree \cite{Fe}, Bohannon et al.'s partial-key T-tree and partial-key B-tree \cite{BMR}, Rao and Ross's CSS-tree \cite{RR99} and CSB+ tree \cite{RR00}, Chen et al.'s pB+ tree \cite{Chen01} and fpB+ tree \cite{Chen02}, Schlegel et al.'s k-ary search tree \cite{KARY}, Boehm et al.'s Generalized Prefix Tree \cite{BSV}, Kissinger et al.'s KISS-tree \cite{KISS}, and more recently Kim et al.'s Fast Architecture-Sensitive Tree (FAST) \cite{KCS}, Yamamuro et al.'s VAST-tree \cite{YOH}, Levandoski et al.'s Bw-tree \cite{bwtree}, Leis et al.'s Adaptive Radix Tree (ART) \cite{LKN}, Zhang et al.'s SuRF \cite{SuRF},
Binna et al.'s HOT \cite{HOT}.
See Graefe and Larson \cite{GL} and Graefe \cite{G11} for surveys.

Our work reduces the sizes of sort keys by compressing them, from which the speedups in sorting and index rebuilding are obtained. Hence our work is orthogonal to the previous work on efficient index structures, and it can be applied to many index structures. Compressing keys by distinction bits can also be applied to big data file formats. Popular self-described file formats such as ORC \cite{ORC} and Parquet \cite{PAR} adopt columnar storage structures to cope with read-heavy analytic workloads against large-scale distributed datasets. Compression by distinction bits can accelerate such common analytic tasks as sorting data and generating unique keys.

Kim et al.'s FAST \cite{KCS} proposed a key compression technique which extracts bits of index keys in the bit positions where the index keys are not the same (which are called \emph{variant bits}). 
We go one step further and use distinction bits to determine the sorted order of index keys correctly.

There has been research on \emph{order-preserving compression} \cite{OP,DOP,QUERY,DOPC,OPKC} which
maps index keys into encoded values such that the order of index keys is the same as the order of encoded values. In order-preserving compression, index keys are replaced by encoded values. In our compression scheme, however, there is no encoding. We simply extract part of index keys (i.e., distinction bits) to speed up sorting and index building. The index tree built by our compression scheme will be a conventional B-tree index without any encoding of index keys.

For sorting in multi-core CPUs, there have recently been many results exploiting SIMD parallelism \cite{IMK,CMB,SKC}.
In our target applications which require a wide range of index key sizes, however, the size of index keys is too big to exploit SIMD parallelism. Thus we implemented our own parallel sorting algorithm called the {\em row-column sort}, which relies only on the operation of comparing two elements during sorting. The comparison operator is called a \emph{comparator}. Therefore, the row-column sort works with any key sizes. (In contrast, sorting on SIMD needs quite different algorithmic techniques such as merging networks \cite{IMK,CMB,SKC}.)
In our experiments we compared the row-column sort with GCC STL parallel sort \cite{GCC},
which is an available parallel sort code on multi-core CPUs in which a custom comparator can be used.
Experiments show that the row-column sort shows a better speedup than GCC STL sort, and
it is 31.4\% faster than GCC STL sort when the number of cores is 16.

\section{Compressed Key Sort}
In this section we first prove that the distinction bits of index keys are sufficient information to determine the sorted order of the index keys correctly, and then
present our compressed key sort based on distinction bits, which is the central idea in rebuilding main-memory indexes efficiently. 

\begin{figure}
\centering
\includegraphics[height=8cm]{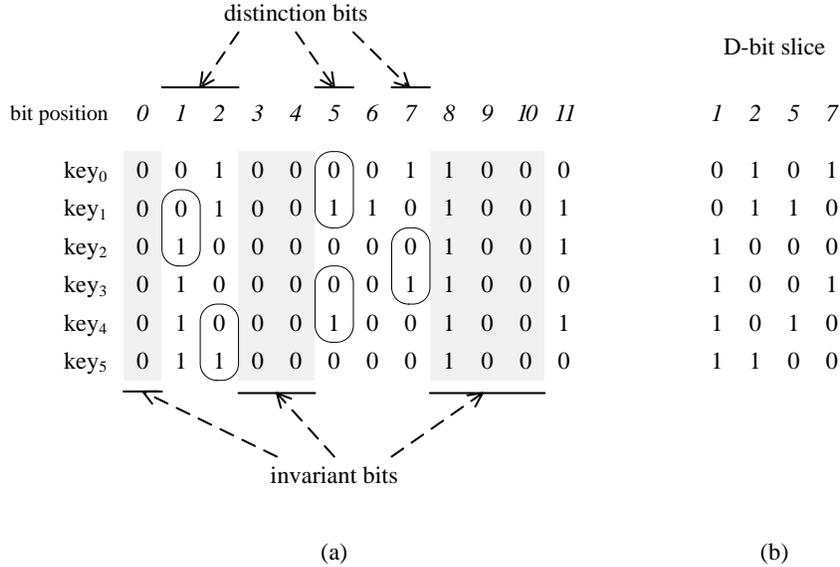}
\caption{Distinction bits and invariant bits.}
\label{fig-dbit}
\end{figure}

\subsection{Definitions}
We first introduce some terms to describe the compressed key sort. We consider key values as binary strings throughout this paper. The bit positions where all key values of the given dataset are identical are called \emph{invariant bit positions} (the bits in these positions are called \emph{invariant bits}). The other bit positions are called \emph{variant bit positions} (the bits themselves are called \emph{variant bits}). Each row in Figure~\ref{fig-dbit} (a) represents a key value. In the figure, bit positions 0, 3, 4, 8, 9 and 10 are invariant bit positions, and bit positions 1, 2, 5, 6, 7 and 11 are variant bit positions. Note that bit positions start with 0, and the bit of a key in bit position $i$ will be called the $(i+1)$-st bit of the key (i.e., the bit in bit position 0 is the first bit, the bit in bit position 1 is the second bit, etc.).
The first bit (i.e., in bit position 0) is the most significant bit in the keys.

The \emph{distinction bit position} of two keys is defined as the most significant bit position where the two keys differ (the bits themselves are called \emph{distinction bits}). The name \emph{distinction bit}\footnote{The name \emph{discriminative bit} is used in \cite{HOT}.} is from \cite{Fe}, but the main focus of this paper is sorting based on distinction bits\footnote{This idea was also described in \cite{Patent}.}.

Suppose that there are $n+1$ keys $key_0,key_1,\ldots, key_n$ and they are in lexicographic order (sorted order), i.e., $key_0 < key_1 < \cdots < key_n$. 
The distinction bit position of two keys $key_i$ and $key_j$ is denoted by $\text{D-bit}(key_i, key_j)$.
Let $D_i = \text{D-bit}(key_{i-1}, key_i)$ for $1\leq i\leq n$, i.e., the distinction bit position of two adjacent keys in sorted order. 
We prove that the set of distinction bit positions of all possible key pairs is the same as the set $\{D_1,D_2,\ldots,D_n\}$. First, we need a lemma.

\begin{lemma}\label{lemma-dbit} 
$\text{D-bit}(key_i, key_j) = \min_{i < k \leq j} D_k$ for all $0\leq i < j\leq n$.
\end{lemma}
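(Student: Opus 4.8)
The plan is to pin down $\text{D-bit}(key_i, key_j)$ by showing, for $m := \min_{i < k \le j} D_k$, that $key_i$ and $key_j$ agree in every bit position more significant than $m$ (i.e.\ every position $< m$) and disagree at position $m$; by the definition of the distinction bit position this forces $\text{D-bit}(key_i, key_j) = m$. Throughout I would lean on the single structural fact that the sorted order gives us: since bit position $0$ is the most significant bit, whenever $key_a < key_b$ the position $p = \text{D-bit}(key_a, key_b)$ is exactly where their common prefix ends, and there $key_a$ carries a $0$ while $key_b$ carries a $1$.

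First I would handle agreement above $m$. For each adjacent pair $key_{k-1}, key_k$ with $i < k \le j$ we have $D_k \ge m$ by minimality, so the two keys coincide on all positions $0, 1, \ldots, m-1$. Chaining this equality along $key_i, key_{i+1}, \ldots, key_j$ shows that $key_i$ and $key_j$ agree on every position $< m$; hence their first difference cannot occur before position $m$, and one does occur somewhere because $key_i < key_j$.

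The heart of the argument is to show they actually differ at position $m$, and this is where I expect the real work. I would track the single bit in position $m$ across the block $key_i, \ldots, key_j$. Because the previous step shows all keys in the block already agree in the positions above $m$, any adjacent pair $key_{k-1}, key_k$ that differs at position $m$ must have its first difference exactly there, so $D_k = m$ and, by the sorted order, $key_{k-1}$ carries a $0$ and $key_k$ a $1$ at position $m$. Consequently the bit at position $m$ is non-decreasing along the block and can only jump from $0$ to $1$, never back. Picking any $k^\ast$ attaining the minimum ($D_{k^\ast} = m$), such a jump occurs at $k^\ast$, so monotonicity forces the bit to be $0$ at $key_{k^\ast - 1}$ and hence at $key_i$, and $1$ at $key_{k^\ast}$ and hence at $key_j$. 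Thus $key_i$ and $key_j$ differ at position $m$, which combined with the agreement above $m$ gives $\text{D-bit}(key_i, key_j) = m$.

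The main obstacle is this monotonicity step: it is the only place that genuinely invokes the sorted-order sign convention, and care is needed both to exclude a $1 \to 0$ flip and to propagate the bit value correctly past the possibly several indices where $D_k = m$. As a fallback I would keep an induction on $j - i$ built on the three-key identity $\text{D-bit}(key_i, key_j) = \min\bigl(\text{D-bit}(key_i, key_{j-1}),\, D_j\bigr)$, which is the ultrametric (strong-triangle) property of common-prefix length. Proving that identity splits into the two cases $\text{D-bit}(key_i, key_{j-1}) < D_j$ and $\text{D-bit}(key_i, key_{j-1}) > D_j$, and the remaining equality case is vacuous precisely because $key_{j-1}$ cannot carry both a $0$ and a $1$ in one position.
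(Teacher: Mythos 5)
Your main argument is correct, but it takes a genuinely different route from the paper. The paper proves the lemma by induction on $d = j-i$: writing $D = \text{D-bit}(key_i, key_{j-1})$, it compares $D$ with $D_j$, shows in each of the cases $D < D_j$ and $D_j < D$ that the smaller of the two equals $\text{D-bit}(key_i, key_j)$, and dismisses $D = D_j$ with a terse remark that a bit position admits only two values --- this is exactly your ``fallback'' plan, including your (more complete) justification of the equality case, namely that $key_{j-1}$ would have to carry both a $0$ and a $1$ at the same position. Your primary proof instead fixes $m = \min_{i<k\le j} D_k$ and argues directly, with no induction: chaining prefix agreement across adjacent pairs gives agreement of $key_i$ and $key_j$ on all positions above $m$, and the monotonicity of the bit at position $m$ along the block (it can only flip $0 \to 1$, by the sorted-order sign convention at a distinction position) gives disagreement at $m$. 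Both arguments are sound. Your version makes explicit exactly where sortedness enters, whereas the paper invokes it only implicitly in excluding $D = D_j$; it also yields a by-product the induction leaves invisible: in any range $(i,j]$ the minimum is attained by exactly one adjacent pair, since the bit at position $m$ can flip only once --- which incidentally disposes of the concern you raise about ``possibly several indices where $D_k = m$''. The paper's induction, in exchange, is shorter per step and needs only the ultrametric-style comparison of two candidate positions.
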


\begin{proof}
We prove by induction on $d = j-i$. When $d=1$, the lemma holds trivially. For induction hypothesis, assume that the lemma holds for $d\geq 1$. 

We now prove the lemma for $d+1$. Let $D = \text{D-bit}(key_i, key_{j-1})$. Because $(j-1)-i = d$, $D=\min_{i < k\leq j-1}D_k$ by induction hypothesis, which means that the first $D$ bits of $key_i, key_{i+1},\ldots, key_{j-1}$ are the same. Consider $D$ and $D_j$ ($=\text{D-bit}(key_{j-1}, key_j)$). 
\begin{itemize}
\item If $D<D_j$, then $D$ is D-bit$(key_i, key_j)$ because
the $(D+1)$-st bit of $key_j$ is the same as that of $key_{j-1}$
which is different from that of $key_i$, while the first $D$ bits of $key_i, key_{i+1},\ldots, key_j$ are the same.
Since $D=\min_{i < k\leq j-1}D_k$ and $D<D_j$, $D = \min_{i < k \leq j} D_k$.
\item If $D_j<D$, then $D_j$ is $\text{D-bit}(key_i, key_j)$ because
the $(D_j+1)$-st bit of $key_j$ is different from that of $key_{j-1}$ which is the same as that of $key_i$, while the first $D_j$ bits of $key_i, key_{i+1},\ldots, key_j$ are the same.
Since $D=\min_{i < k\leq j-1}D_k$ and $D_j<D$, $D_j = \min_{i < k \leq j} D_k$.
\end{itemize} 
For example, $\text{D-bit}(key_1, key_3) = 1$ because $D_2 = 1 < D_3 = 7$,
and $\text{D-bit}(key_0, key_2) = 1$ because $D_1 = 5 > D_2 = 1$ in Figure~\ref{fig-dbit} (a).
Note that $D$ cannot be equal to $D_j$ because we have only two possibilities, 0 and 1, in a bit position. 
\end{proof}

\begin{theorem}\label{theorem-dbit} 
The set $D_{all}$ of distinction bit positions of all possible key pairs is the same as the set $D_{adj}=\{D_1,D_2,\ldots,D_n\}$, i.e., the set of distinction bit positions of adjacent keys in sorted order.
\end{theorem}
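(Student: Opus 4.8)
The plan is to prove the set equality $D_{all} = D_{adj}$ by establishing the two inclusions separately, with Lemma~\ref{lemma-dbit} carrying essentially all the weight of the nontrivial direction.

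First I would dispatch the easy inclusion $D_{adj} \subseteq D_{all}$. Each $D_i$ is by definition $\text{D-bit}(key_{i-1}, key_i)$, the distinction bit position of the adjacent pair $(key_{i-1}, key_i)$. Since an adjacent pair is in particular a key pair, every element of $D_{adj}$ arises as the distinction bit position of some pair and therefore belongs to $D_{all}$.

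For the reverse inclusion $D_{all} \subseteq D_{adj}$, I would take an arbitrary element of $D_{all}$, say $\text{D-bit}(key_i, key_j)$ for some pair of distinct keys. Because the distinction bit position is symmetric in its two arguments, I may assume without loss of generality that $i < j$, so that the pair falls under the hypothesis of Lemma~\ref{lemma-dbit}. The lemma then gives $\text{D-bit}(key_i, key_j) = \min_{i < k \leq j} D_k$. Since this is a minimum over the finite nonempty index set $\{i+1, \ldots, j\}$, it is \emph{attained} by some $k^\ast$ in that range, whence $\text{D-bit}(key_i, key_j) = D_{k^\ast} \in D_{adj}$. Combining the two inclusions yields $D_{all} = D_{adj}$.

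The proof presents no genuine obstacle once the lemma is in hand; the only points requiring a moment of care are the symmetry reduction to the case $i < j$ and the observation that the minimum in the lemma is actually achieved, so that its value is literally one of the $D_k$ rather than merely bounded by them. Conceptually, Lemma~\ref{lemma-dbit} is precisely the right statement, since it expresses every pairwise distinction bit position as a minimum of adjacent ones, and a minimum over a finite set of values is always a member of that set.
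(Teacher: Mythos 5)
Your proof is correct and follows essentially the same route as the paper: the easy inclusion $D_{adj}\subseteq D_{all}$ plus an application of Lemma~\ref{lemma-dbit} to write any $\text{D-bit}(key_i,key_j)$ as $\min_{i<k\leq j} D_k$, which lies in $D_{adj}$. Your explicit remarks that the minimum over a finite nonempty set is attained and that symmetry justifies the reduction to $i<j$ are fine points the paper leaves implicit, but they do not change the argument.
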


\begin{proof}
Since adjacent key pairs are part of all key pairs, we have $D_{adj}\subseteq D_{all}$.

To prove $D_{all}\subseteq D_{adj}$, we show that the distinction bit position of any pair (say, $key_i$ and $key_j$) belongs to $D_{adj}$.
Without loss of generality, assume that $i<j$.
By Lemma~\ref{lemma-dbit}, $\text{D-bit}(key_i, key_j) = \min_{i < k \leq j} D_k$.
Since $D_k$ is in $D_{adj}$, so is $\text{D-bit}(key_i, key_j)$.
\end{proof}

By Theorem~\ref{theorem-dbit}, all possible distinction bit positions for $n+1$ keys are $D_1,\ldots,D_n$ (i.e., there are at most $n$ distinction bit positions),
which is a crucial fact in our compressed key sort.
Since some of $D_i$ values may be the same (e.g., $D_1=D_4=5$ in Figure~\ref{fig-dbit}),
the number of distinction bit positions can be much less than $n$.
In Figure~\ref{fig-dbit} (a), bit positions 1, 2, 5 and 7 are distinction bit positions, because $D_1 = 5$, $D_2 = 1$, etc. It is obvious that distinction bit positions are variant bit positions. However, there may be variant bit positions which are not distinction bit positions. In Figure~\ref{fig-dbit} (a), bit positions 6 and 11 are such positions.

\subsection{Key Compression}
\emph{Extended distinction bit positions} mean all distinction bit positions plus some other (zero or more) bit positions.
Let $\text{Compress}(key_i)$ be the concatenation of the bits of $key_i$ in extended distinction bit positions. The \emph{distinction bit slice} (or \emph{D-bit slice}) is defined as the set  $\{\text{Compress}(key_0),\ldots,\break\text{Compress}(key_n)\}$.
See Figure~\ref{fig-dbit} (b). The distinction bit slice is simply a set of $\text{Compress}(key_i)$'s, not necessarily sorted by $\text{Compress}(key_i)$.

\begin{theorem}\label{theorem-dbitslice}
The distinction bit slice is sufficient information to determine the sorted order of the keys.
\end{theorem}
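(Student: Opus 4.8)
The plan is to reduce the statement about the global sorted order to a statement about pairwise comparisons. Since the sorted order of a set of distinct keys is completely determined by the outcome of comparing every pair, it suffices to show that for all $0 \le i, j \le n$ we have $key_i < key_j$ if and only if $\text{Compress}(key_i) < \text{Compress}(key_j)$, where the latter is the lexicographic comparison of the compressed strings. In other words, I would argue that the comparator behaves identically on the full keys and on their D-bit slices, so that any comparison-based sort yields the same ordering in both cases.

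To establish this equivalence I would fix a pair $key_i, key_j$ with $i \ne j$ and let $d = \text{D-bit}(key_i, key_j)$ be their distinction bit position. By definition of the distinction bit position, $key_i$ and $key_j$ agree on bit positions $0,\ldots,d-1$ and differ in position $d$, so the lexicographic order between the two full keys is decided entirely by the bit in position $d$: using the convention that $0 < 1$, the relation $key_i < key_j$ holds exactly when $key_i$ carries a $0$ and $key_j$ a $1$ there. The task then becomes showing that the compressed strings are ordered by this very same bit.

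The key step is to invoke Theorem~\ref{theorem-dbit}. Since $d$ is the distinction bit position of some pair, it lies in $D_{all} = D_{adj}$; hence $d$ is one of the distinction bit positions and is therefore among the extended distinction bit positions retained by $\text{Compress}$. Listing the retained positions in increasing order, I would observe that every retained position strictly smaller than $d$ lies among $0,\ldots,d-1$, on which $key_i$ and $key_j$ agree; consequently $\text{Compress}(key_i)$ and $\text{Compress}(key_j)$ coincide on all coordinates preceding the one corresponding to $d$ and differ at that coordinate. Thus the lexicographic comparison of the compressed strings is decided by the bit in position $d$, exactly the bit that decides the comparison of the full keys, which yields the desired equivalence.

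I expect the main obstacle to be conceptual rather than computational: one must be careful that it is Theorem~\ref{theorem-dbit}, and not merely the definition of $D_{adj}$, that guarantees the decisive position $d$ survives compression. A naive argument would keep only the distinction bits of adjacent keys and then be stuck worrying about arbitrary pairs $key_i, key_j$ that are far apart in sorted order; Theorem~\ref{theorem-dbit} removes this worry by showing that every pair's distinction bit position already appears among the adjacent ones. Once this is in hand, the remaining bookkeeping about which retained coordinates precede $d$ is routine.
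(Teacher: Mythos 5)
Your proposal is correct and follows essentially the same route as the paper: fix a pair, let $d$ be its distinction bit position, note that the full-key comparison is decided by the bit at position $d$, and use Lemma~\ref{lemma-dbit} (equivalently Theorem~\ref{theorem-dbit}) to conclude that position $d$ survives compression while all retained positions before it carry identical bits. You spell out the final bookkeeping about the compressed coordinates more explicitly than the paper does, but the argument is the same.
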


\begin{proof}
We first prove the theorem for the distinction bit positions. 
We prove that the following relation holds: 
\[
key_i < key_j\mbox{ if Compress}(key_i) < \mbox{Compress}(key_j)
\]
for all $i$ and $j$. Let $D = \text{D-bit}(key_i, key_j)$. Since the first $D$ bits of $key_i$ and $key_j$ are the same, the order of $key_i$ and $key_j$ is determined by the bits in bit position $D$. By Lemma~\ref{lemma-dbit}, bits in bit position $D$ are in Compress (and thus in the distinction bit slice). Hence, the order between $key_i$ and $key_j$ is determined by the order between $\text{Compress}(key_i)$ and $\text{Compress}(key_j)$.

Due to the relation above, we can correctly determine the sorted order of the keys by the distinction bit slice. Similarly, we can prove the theorem for extended distinction bit positions.
\end{proof}



When we maintain an index for a database table, index keys may be inserted, deleted, or updated by database operations. Then distinction bit positions may be changed at runtime. For example, if $key_3$ is deleted in Figure~\ref{fig-dbit}, position 7 is no longer a distinction bit position (but it is still a variant bit position). If $key_0$ is also deleted, distinction bit positions don't change, but position 7 becomes an invariant bit position. If an index key is inserted, a new distinction bit position may be added. 
It is quite expensive to maintain the distinction bit positions accurately when delete operations are allowed. 
Theorem~\ref{theorem-dbitslice} allows us to lazily update  distinction bit positions without affecting the correctness of sorting by letting some invalidated bit positions stay.

The scheme of extracting distinction bits of $key_i$ into $\text{Compress}(key_i)$ for all $i$ and sorting $\text{Compress}(key_i)$'s 
rather than sorting full key values is called \emph{compressed key sort}.
In order to extract compressed keys from index keys,
we need to keep only (extended) distinction bit positions as a bitmap.
Compressed key sort is the main reason for the speedup of index reconstruction.

\begin{figure}
\centering
\includegraphics[height=3cm]{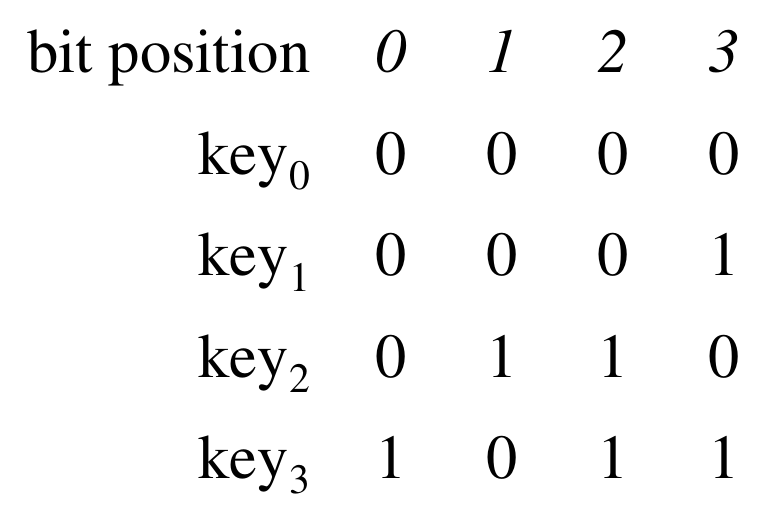}
\caption{Distinction bit positions and minimum bit positions.}
\label{fig-dbitpos}
\end{figure}

\vspace{6pt}
\noindent\textbf{Remark 1.}
To build an index, the sorting of index keys is necessary, which requires $O(n\log n)$ time. To compute distinction bit positions additionally, our compressed key sort needs $O(n)$ time to compare adjacent keys in sorted order.
However, our key compression is not optimal in terms of the number of bit positions if an unlimited time is allowed. For the given keys in Figure~\ref{fig-dbitpos}, our key compression selects bit positions 0, 1, and 3 as distinction bit positions, but the bit slice in bit positions 2 and 3 can correctly determine the sorted order of the keys, and this is the minimum number of bit positions.
An optimal algorithm can find the minimum number of bit positions by choosing every subset of the bit positions and checking whether the bit slice in the subset of bit positions can correctly determine the sorted order of the keys.
We conjecture that our key compression is best (in terms of the number of bit positions) if the sorting complexity (i.e., $O(n\log n)$ time) is allowed.

\begin{figure}
\centering
\includegraphics[height=4.5cm]{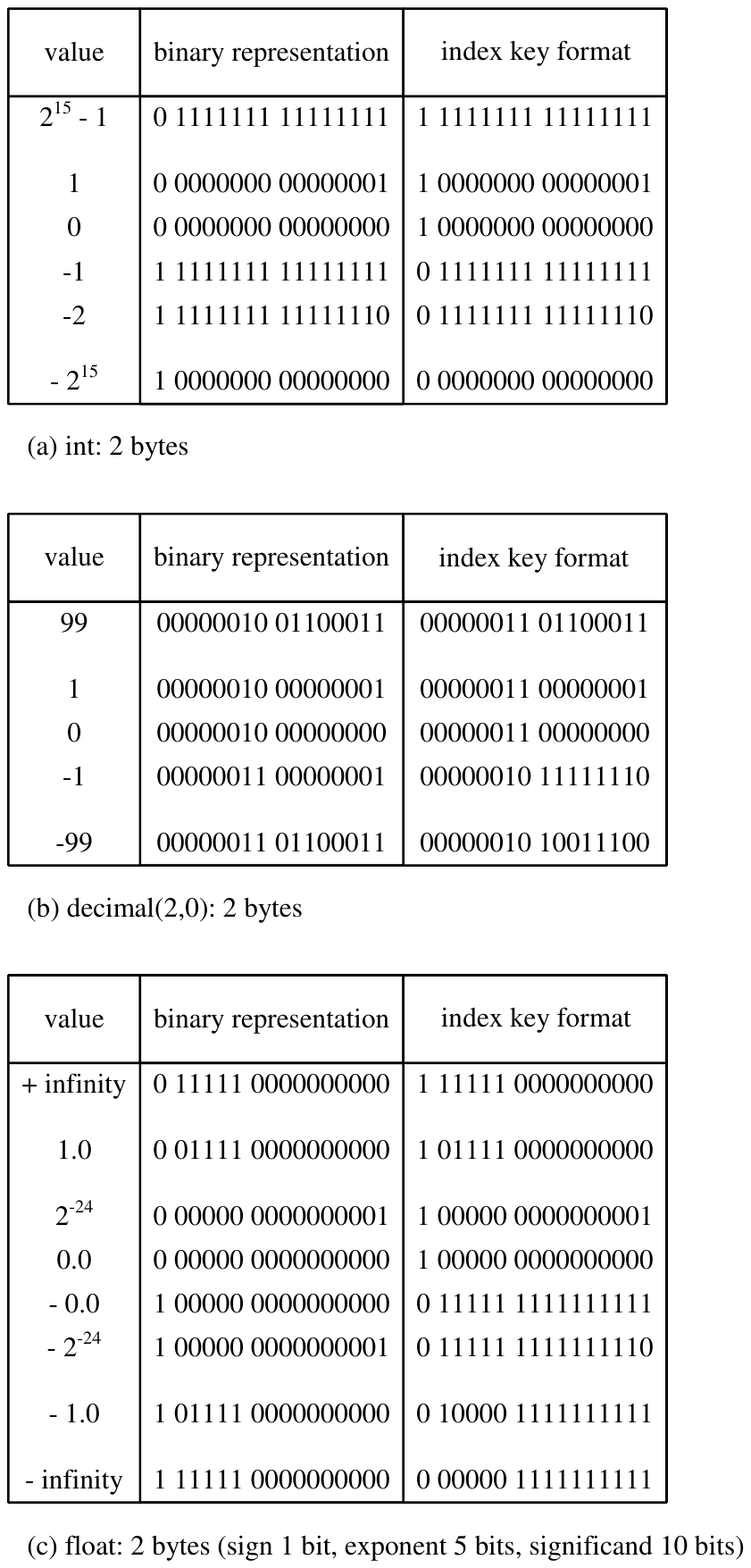}
\caption{Binary representations and index key formats of 2-byte decimal(2,0).}
\label{fig-binary}
\end{figure}

\section{Data Structures}
\subsection{Index Key Format}
The B+ tree and its variants are widely used as indexes in modern DBMSs to enable fast access to data with a search key. If an index is defined on columns $A_1, \ldots, A_k$ of a table, its key can be represented as a tuple of column values, of the form $(a_1, \ldots, a_k)$ \cite{SKS}. The ordering of the tuples is the lexicographic ordering.  When $k=2$, for example, the order of two tuples $(a_1,a_2)$ and $(b_1,b_2)$ is determined as follows: $(a_1,a_2) < (b_1,b_2)$ if $a_1 < b_1$ or ($a_1 = b_1$ and $a_2 < b_2$). 

In this section we describe how to make actual index keys from the tuples of column values so as to keep the lexicographic ordering of the tuples. We first explain how to make index keys from different data types and then explain how to make an index key from multiple columns.

For each data type (\texttt{int}, \texttt{float}, \texttt{decimal}, \texttt{string}, etc.), its index key format can be defined so that a lexicographic binary comparison in the index key format is equivalent to a comparison of original data values.
For the mappings of data types \texttt{int} and \texttt{float} to index key formats, we refer readers to \cite{LKN}. Here we describe the mappings of \texttt{decimal} and \texttt{string} to index key formats.

\begin{enumerate}



\item[A.] decimal: A decimal number $x$ is represented by a 1-byte header and a decimal part. The last bit of the header is the sign of the decimal number (1 for negative), and the second-to-last bit indicates whether the entry is null or not (0 for null). The decimal part contains a binary number corresponding to $x$ in $\lceil \log_2 (x+1) / 8\rceil$ bytes. 
The location of the decimal point is stored in the metadata of the column. For mapping, if the sign bit is 1 (i.e., $x$ is negative), toggle the sign bit and all bits of the decimal part; otherwise, toggle the sign bit only.
Then the order of the mapped values corresponds to that of the decimal numbers.
See Figure~\ref{fig-binary}, where decimal$(m,n)$ means $m$ total digits, of which $n$ digits are to the right of the decimal point.

\item[B.] fixed-size string: We use a fixed-size string as it is.

\item[C.] variable-size string with maximum length: 
A variable-size string with maximum length $n$ is denoted by varchar$(n)$.
We assume that the null character (denoted by $\emptyset$) is not allowed in the variable-size string. (In the case that null characters are allowed, we need to use some encoding of characters so that the encoded string doesn't have null characters.)
We attach one null character at the end of the variable-size string to make the index key value. 
Then the lexicographic order of index key values corresponds to that of variable-size strings as follows. If two index keys have the same lengths, the order between them is trivially the order of the strings. If two index keys have different lengths (let $k$ be the length of the shorter key) and their first $k-1$ bytes have different values, their order is determined by the first $k-1$ bytes. If two keys have different lengths and their first $k-1$ bytes have the same values, the shorter one is smaller in lexicographic order because it has a null character in the $k$-th byte and the longer one has a non-null character in the $k$-th byte. For instance, if two keys are AB$\emptyset$ and ABA$\emptyset$, then AB$\emptyset$ is smaller than ABA$\emptyset$ due to the 3rd bytes and this is the lexicographic order between two strings AB and ABA. Furthermore, the distinction bit position takes place in the null character of the shorter key.
\end{enumerate}
In each data type, the order between two index keys can be determined by a lexicographic binary comparison of them. 

\begin{figure}
\centering
\includegraphics[height=6cm]{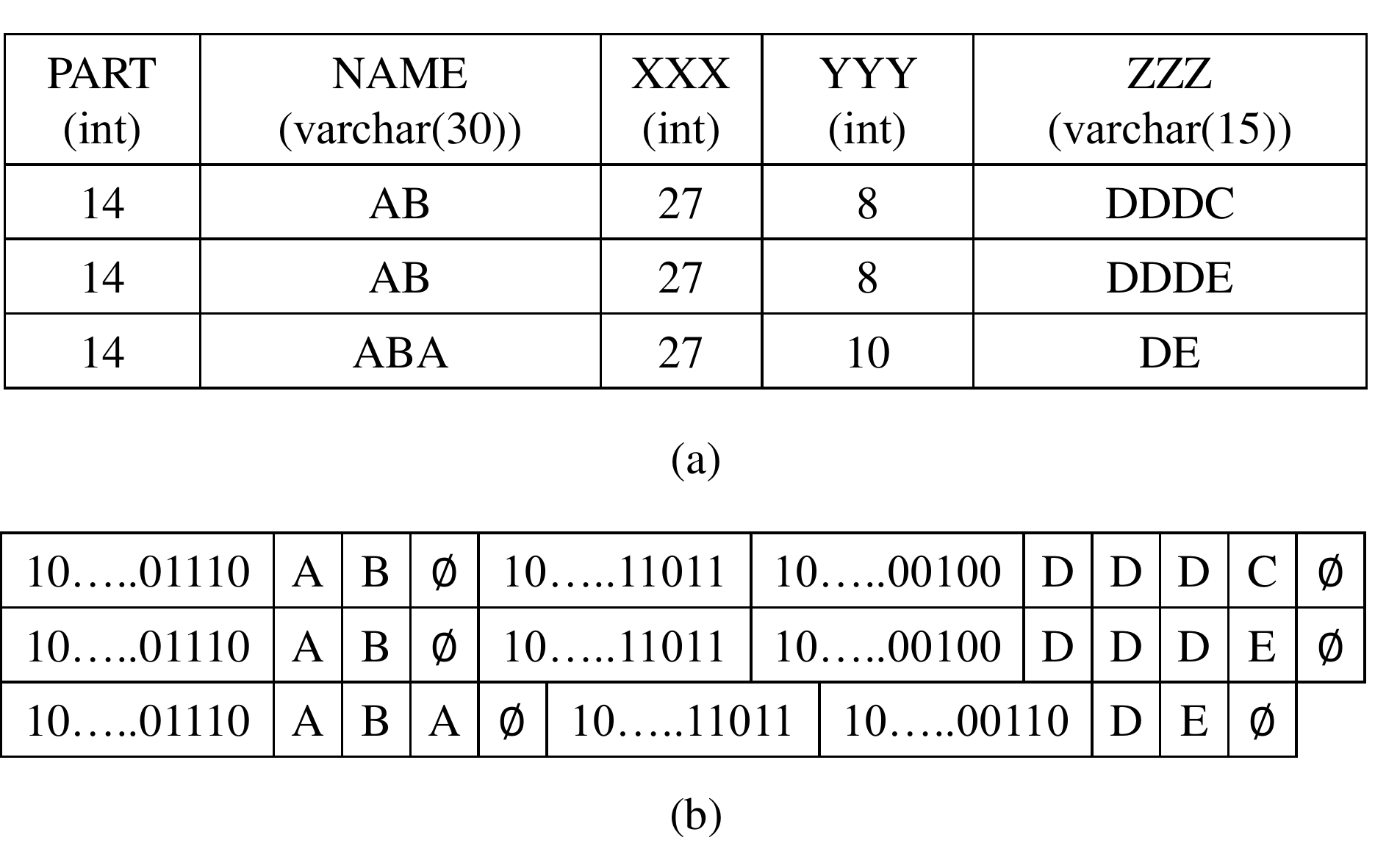}
\caption{Index keys from multiple columns. (a) Database table. (b) Index keys from the five columns.}
\label{fig-multicolumn}
\end{figure}

We now explain how to make an index key from multiple columns. An index key on multiple columns is defined as the concatenations of index keys from the multiple columns. Suppose that an index key is defined on the following five columns: PART (int), NAME (varchar(30)), XXX (int), YYY (int), and ZZZ (varchar(15)). Example column values in some rows are shown in Figure~\ref{fig-multicolumn} (a), and the index keys of the three rows are in Figure~\ref{fig-multicolumn} (b).

The distinction bit positions in Section 3.1 are defined on these full index keys. If the data types of index columns have fixed lengths (as in int, float, decimal, and fixed-size string), the column values are aligned in the index keys, and the order between index keys are determined by the lexicographic order of the column values.

If the data types of index columns have variable lengths (as in variable-size string), however, the column values may not be aligned in the index keys, as shown in Figure~\ref{fig-multicolumn} (b). Still we define distinction bit positions on these full index keys. If two rows have variable-size strings of different lengths in a column (e.g., column NAME in Figure~\ref{fig-multicolumn}), the distinction bit position takes place in that column as described above if previous columns have the same values as in Figure~\ref{fig-multicolumn}, and the order between the two index keys are determined by the lexicographic order of the variable-size strings in that column.

To compare two index keys, we make a binary comparison (by word sizes) of the two keys. If one index key is shorter, it is padded with 0's in the binary comparison. (The padded value does not affect the order of the two keys.) In this way we define distinction bits and distinction bit positions on full index keys derived from multiple columns.

\subsection{Index Tree and DS-metadata}
Although our compressed key sort can work with any variant of the B+ tree index structure, available codes for indexes have a small and fixed length for index keys (4 bytes for FAST \cite{FAST} and CSB+ tree \cite{CSB}, and 4 or 8 bytes for k-ary search tree \cite{KARY}) or some restrictions in building indexes (e.g., no parallel index building for ART \cite{LKN}). Therefore, we use a full-fledged index tree which is being used in SAP HANA database system, and 
apply our compressed key sort to it.
Figure~\ref{fig-indextree} shows the structure of the index tree, 
which is a variant of the partial-key B+ tree \cite{BMR}.
To define partial keys on key values $key_0,key_1,\ldots,key_n$ in sorted order, a parameter $pk$ is introduced.
The \emph{partial key} of $key_i$ is the $pk$ bits following the 
distinction bit position $D_i$ \cite{BMR}. In Figure~\ref{fig-dbit}, the partial key of $key_1$ when $pk=4$
is 1010, because $D_1=5$. The distinction bit position $D_i$ is also called the \emph{offset} of the partial key of $key_i$ \cite{BMR}.

\begin{figure}
\centering
\includegraphics[height=9cm]{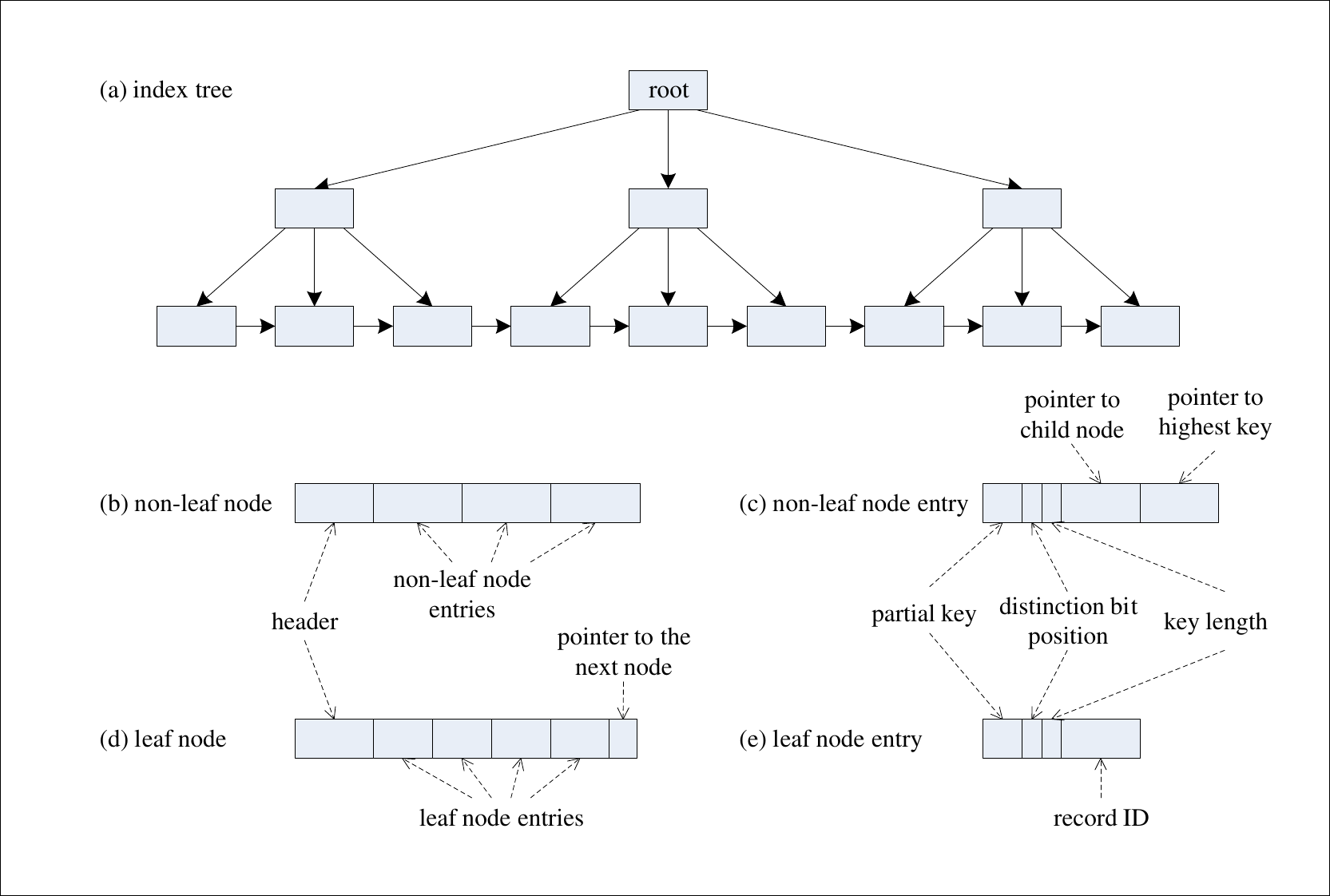}
\caption{Index tree structure.}
\label{fig-indextree}
\end{figure}


We describe the structure of the index tree in Figure~\ref{fig-indextree} which is relevant to this paper.
A leaf node of the index tree contains a list of entries, one for each index key,
plus a pointer to the next node. 
An entry in a leaf node consists of a partial key, a distinction bit position, an index key length, and a record ID. 
The header of a leaf node contains a pointer to the last (i.e., highest) index key of the entries in the leaf node.
A non-leaf node contains a list of entries, one for each child. 
The index key corresponding to an entry is the highest index key in the descendant leaves of the child corresponding to the entry.
An entry in a non-leaf node consists of a partial key, a distinction bit position, an index key length, a pointer to the child node, and a pointer to the highest index key (where the partial key and the index key length are those of the highest index key, and the distinction bit position is that of the highest index key against the highest index key of the previous entry).

In addition, we keep the following information \emph{persistently} for each index tree, which will be called the DS-metadata (DS stands for D-bit Slice).

\begin{enumerate}
\item D-bitmap: Our compression scheme requires distinction bit positions, which can be represented by a bitmap. The position of each bit in the bitmap means the position in the full index key. While the value 0 means that the bit position is not a distinction bit position, the value 1 means that it is possibly a distinction bit position.

\item Variant bitmap: Similarly we store variant bit positions in a bitmap, where value 0 in a bit position means that the bit position is not a variant bit position and value 1 means that it is possibly a variant bit position.

\item Reference key value: We need a reference key value for invariant bits, which can be an arbitrary index key value because the invariant bits are the same for all index keys.
\end{enumerate}

\noindent
Note that we use extended distinction bit positions to define the D-bitmap.
We maintain the variant bitmap and a reference key value in order to obtain partial keys when we rebuild our index tree. If partial keys are not needed in an index, the variant bitmap and a reference key value are not necessary, and we need only maintain the D-bitmap, which is the main information to keep for efficient index rebuilding.

\subsection{Search and Update Operations}

We describe how to perform search/insert/delete operations with the index tree and DS-metadata.
\begin{itemize}
\item {Search: Given a search key value $K$, we search down the index tree for $K$ as follows.

In a non-leaf node, we need to compare $K$ with an index key in a non-leaf node entry. Since the entry has a pointer to the highest index key (say, $A$), we make a binary comparison of two full key values $K$ and $A$. 

A leaf node contains a list of partial keys, and thus we need to compare search key $K$ with a list of partial keys. The procedure to compare $K$ with a list of partial keys is the same as the one described in Bohannon et al.~\cite{BMR}.}

\item {Insert: Given an insert key value $K$, insert $K$ into the index tree as follows.
\begin{enumerate}
\item {Search down the index tree with $K$ and find the right place for insertion (say, between two keys $A$ and $B$).}
\item {Compute the distinction bit positions $\text{D-bit}(A,K)$ and $\text{D-bit}(K,B)$.}
\item {Make changes in the index tree corresponding to the insertion, and update the D-bitmap and the variant bitmap as follows. For the D-bitmap, in principle we need to remove the bit position $\text{D-bit}(A,B)$ and add new distinction bit positions $\text{D-bit}(A,K)$ and $\text{D-bit}(K,B)$
because key $K$ has been inserted between $A$ and $B$. By Lemma~\ref{lemma-dbit}, however, $\text{D-bit}(A,B) = \min(\text{D-bit}(A, K), \text{D-bit}(K,B))$. Since the minimum position is $\text{D-bit}(A,B)$ and it is already set in the D-bitmap, we need only set $\max(\text{D-bit}(A, K), \text{D-bit}(K,B))$ in the D-bitmap if it is not already set. For the variant bitmap, we perform a bitwise XOR on $K$ and the reference key value, and then perform a bitwise OR on the variant bitmap and the result of the above bitwise XOR. 
The result of the bitwise OR will be the new variant bitmap. (Notice that the number of actual write operations on the D-bitmap is bounded by the number of 1's in the D-bitmap. Thus the chances that an actual write operation on the D-bitmap occurs during an insertion are low.)
}
\end{enumerate}
}

\item {Delete: Given a delete key value $K$, delete $K$ from the index tree as follows.

We delete $K$ as a usual deletion is done in the index tree, and simply leave the D-bitmap and the variant bitmap without changes. We need to show that the D-bitmap is valid after deleting $K$. Let $A$ and $B$ be the previous key value and the next key value of $K$, respectively. After deleting $K$, $\text{D-bit}(A,B)$ should be set in the D-bitmap. Again by Lemma~\ref{lemma-dbit}, $\text{D-bit}(A,B) = \min(\text{D-bit}(A,K), \text{D-bit}(K,B))$. Since $\text{D-bit}(A,K)$ and $\text{D-bit}(K,B)$ are set in the D-bitmap, $\text{D-bit}(A,B)$ is already set, whether it is $\text{D-bit}(A,K)$ or $\text{D-bit}(K,B)$.}
\end{itemize}
An update operation is done by a delete operation followed by an insert operation. Note that if there are only insert operations (i.e., no delete operations), the D-bitmap represents the distinction bit positions exactly. 

As the data in a database table change, the DS-metadata is updated incrementally as above. When an insertion occurs, at most one distinction bit position is added to the D-bitmap, and some variant bit positions may be added to the variant bitmap. This operation never reverts even if there is a delete or rollback, because implementing the revert exactly is quite expensive. Hence there may be positions in the D-bitmap whose values are 1, but which are not distinction bit positions. Also the variant bitmap may have positions whose values are 1, but which are not variant bit positions. However, they do not affect the correctness as shown in Theorem~\ref{theorem-dbitslice}. Such bit positions can be removed by scanning the index and computing the DS-metadata occasionally. If we rebuild the index anew, then certainly there will be no such bit positions.

With the current DS-metadata, we can rebuild the index tree (which will be described in the next section) when it is lost or unavailable. 
Even after the index tree is rebuilt, we may use the current DS-metadata as the DS-metadata. 
However, index rebuilding is an opportune time to compute the DS-metadata anew.
We compute the new DS-metadata from the current DS-metadata as follows.
\begin{enumerate}
\item D-bitmap: Extract compressed keys from index keys by the current D-bitmap, sort the compressed keys, and compute the distinction bit positions between adjacent compressed keys (all three steps are part of index reconstruction), which make the new D-bitmap.
Note that the bit positions where the current D-bitmap had 0 remain 0 in the new D-bitmap.
\item Reference key value: Take an arbitrary index key as the reference key value.
\item Variant bitmap: Initially the variant bitmap is all 0, and we take index keys one by one (say, $K$) and do the following. Perform a bitwise XOR on $K$ and the reference key value, followed by a bitwise OR on the variant bitmap and the result of the bitwise XOR (as in the insert operation above). 
\end{enumerate}
If we build an index tree for the first time (i.e., there is no DS-metadata at all), then we compute the D-bitmap as above, but with full index keys rather than compressed keys.

\vspace{6pt}
\noindent\textbf{Remark 2.}
Our key compression per se requires $O(n)$ time to compute the DS-metadata initially (other than sorting) and $O(1)$ time to update the DS-metadata for an insertion (other than $O(\log n)$ search time to find the right place to insert).
Note that sorting is needed anyway to build an index tree and a search is needed anyway to find the place to insert.
For the optimal algorithm described in Remark 1, finding the minimum number of bit positions after an insertion is very expensive.
Any \emph{practical} compression scheme should have low complexities in computing compression information such as the DS-metadata and updating the information.

\begin{figure}
\centering
\includegraphics[height=5.8cm]{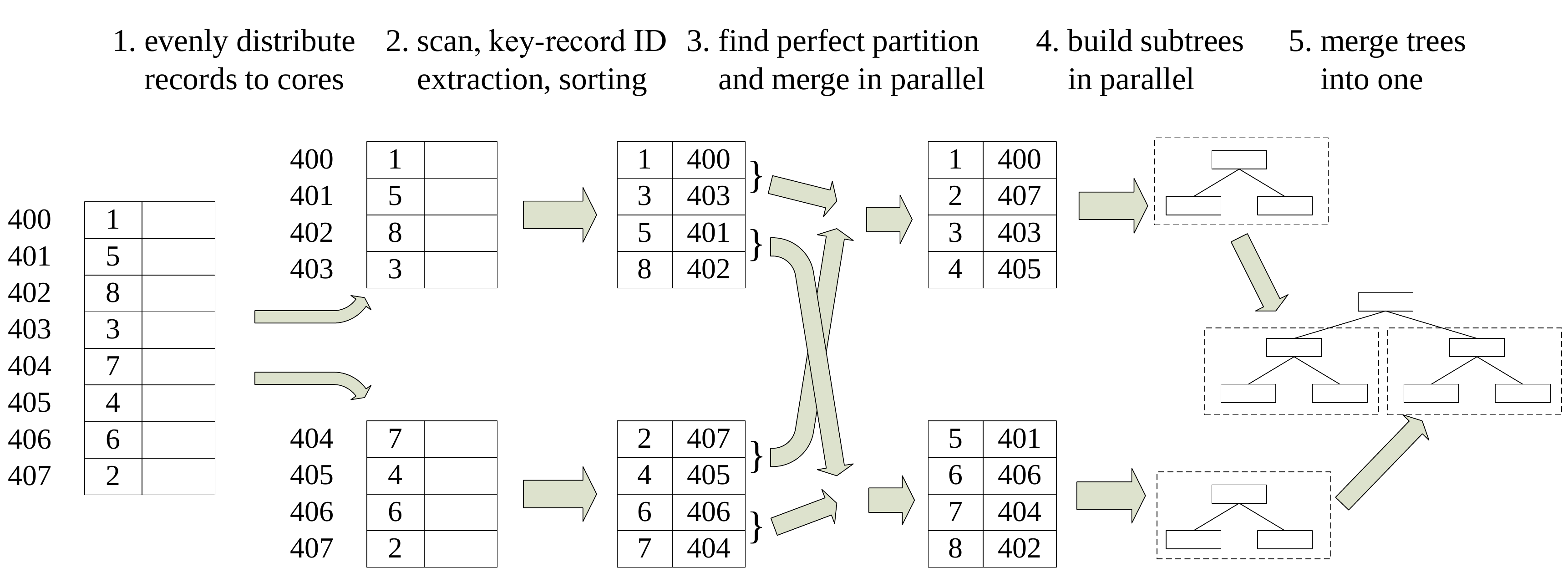}
\caption{Index-reconstruction procedure.}
\label{fig-indexrebuild}
\end{figure}

\section{Index Reconstruction}
We now describe how to build the index tree in parallel from a database table loaded in memory by using the DS-metadata on the fly. We extract only the bits from the index key values whose positions are set in the D-bitmap.
Figure~\ref{fig-indexrebuild} shows the overall procedure of parallel index reconstruction.

\begin{enumerate}
\item{To collect index keys in parallel, data pages of the target table are evenly distributed to the cores.}

\item{Each core scans the assigned data pages and extracts compressed keys and corresponding record IDs. A pair of a compressed key and the corresponding record ID makes a sort key. The record ID is included in the sort key so that each pair of a compressed key and a record ID can be directly used to fill its corresponding leaf node entry without causing many cache misses.}

\item{Sort the pairs of compressed key and record ID by a parallel sorting algorithm.}

\item{Build the index tree in a bottom-up fashion.}
\end{enumerate}

\subsection{Extracting Compressed Keys}
Sort key compression can be done by extracting the bits in the positions which have value 1 in the D-bitmap. We now describe how to get compressed keys from index keys. 
(Though the examples in Figure~\ref{fig-compkey} are shown in the big endian format for readability, the actual implementation was done in the little endian format due to Intel processors.)

\begin{enumerate}
\item{Separate one-word long (8 bytes) masks from the D-bitmap. The first mask starts from the byte which contains the first 1 in the bitmap, and it is 8 bytes long. The second mask starts from the byte which contains the first 1 after the first mask, and it is 8 bytes long, and so on. In the example of Figure~\ref{fig-compkey}, we get three masks from the D-bitmap. See Figure~\ref{fig-compkey} (c).}

\item{By BMI instruction PEXT \cite{BMI} (which copies selected bits from the source to contiguous low-order bits of the destination), extract bits from an index key which are located in the positions where the masks have value 1. See Figure~\ref{fig-compkey} (d).}

\item{Concatenate the extracted bits with shift and bitwise OR operations. Since there are three masks in Figure~\ref{fig-compkey}, the extracted bits are concatenated in three steps (f).(i), (f).(ii), and (f).(iii) by a shift and a bitwise OR in each step. See Figure~\ref{fig-compkey} (e) and Figure~\ref{fig-compkey} (f).}
\end{enumerate}
The bit string in Figure~\ref{fig-compkey} (f).(iii) is the compressed key extracted from the full key in Figure~\ref{fig-compkey} (a).

\begin{figure}
\centering
\includegraphics[height=7cm]{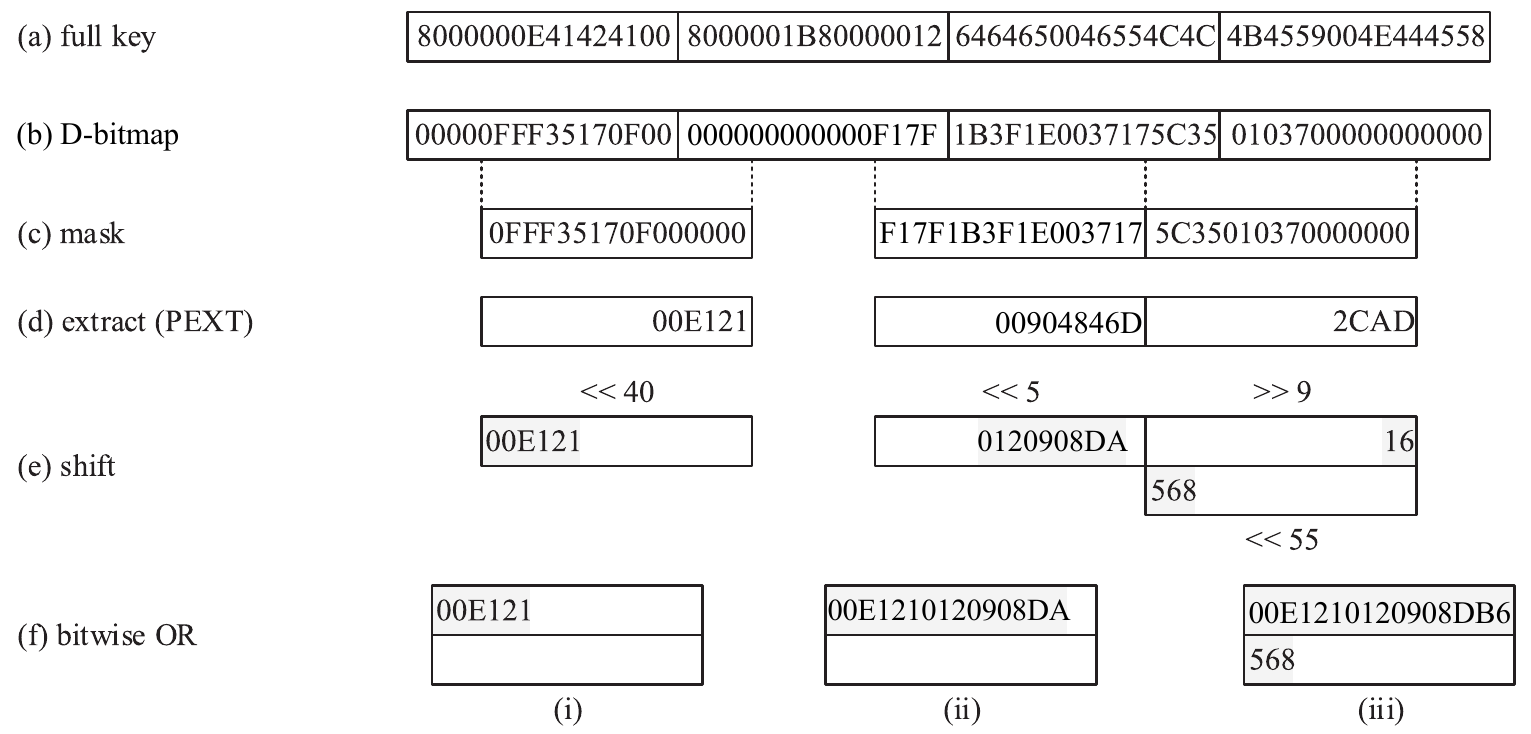}
\caption{Extracting compressed keys from index keys.}
\label{fig-compkey}
\end{figure}

\subsection{Parallel Sorting}
In our target applications which require a wide range of index key sizes, the size of sort keys is usually too big to exploit SIMD parallelism. 
Thus we implemented our own parallel sorting algorithm called the {\em row-column sort}, which is a \emph{comparison sort} \cite{CLRS} (i.e., it relies only on the operation of comparing two elements during sorting; the comparison operator is called a \emph{comparator}).
Hence the row-column sort works for any key sizes. 
The details of the row-column sort are described in Appendix.

\subsection{Parallel Index Construction}
Once the pairs of compressed index key and record ID are sorted in lexicographic order, the index tree can be built in a bottom-up fashion. First, we build leaf nodes from the sorted compressed keys and record IDs. To compute distinction bit positions, we make an array $\text{D-offset}[i]$ from the D-bitmap, which stores the position of the $(i+1)$-st 1 in the D-bitmap. Then the distinction bit position of $key_i$ and $key_{i+1}$ is $\text{D-offset}[\text{D-bit}(\text{Compress}(key_i), \text{Compress}(key_{i+1}))]$. Next, we build non-leaf nodes in a bottom-up fashion. For two adjacent entries in a non-leaf node whose highest keys are $key_i$ and $key_j$, the distinction bit position is $\text{D-offset}[\text{D-bit}(\text{Compress}(key_i), \text{Compress}(key_j))]$.

In the case of our index tree, the leaf nodes and non-leaf nodes contain partial keys of a predefined length $pk$. Given the offset (i.e., distinction bit position) of a partial key and the predefined partial key length $pk$, the bits of the partial key are determined as follows.

\begin{enumerate}
\item[A.]If a bit position of the partial key is included in the compressed key, the bit value can be directly copied from the compressed key. 

\item[B.]	If a bit position is a position which has value 0 in the variant bitmap (i.e., an invariant bit position), the bit value can be copied from the reference key value.

\item[C.]	Otherwise (i.e., a bit position which has value 0 in the D-bitmap and value 1 in the variant bitmap), we have two options.
\begin{enumerate}
\item[a)]	Add the bits required for partial key construction ($pk$ bits following the distinction bit position) to the compressed key and use them here for index construction.

\item[b)]	Since the record ID is also contained in the sort key, necessary bits can be copied from the record, for which a dereferencing is required.
\end{enumerate}
\end{enumerate}

To build an index, we maintain two parameters: max fanout and a fill factor. Each (leaf or non-leaf) node is of size 256B, and it has a header (24B). A leaf node also has a pointer to the next node (8B). Since each entry in a leaf node takes 16B, the max fanout (i.e., maximum number of entries) in a leaf node is $(256-32)/16=14$. Since each entry in a non-leaf node takes 24B, the max fanout in a non-leaf node is 9. The fill factor is defined for each index during index building, and leaf and non-leaf nodes are filled up to max fanout $\times$ fill factor \cite{CLRS}. Given the number of records, the max fanouts, and the fill factor (default value is 0.9), the height of the index tree can be determined.

Index construction can be parallelized by partitioning the sorted pairs of index key and record ID and constructing subtrees in parallel. That is, $n$ sort keys are divided into $p$ blocks of $\frac{n}{p}$ sort keys each, and one block is assigned to a thread (which is the situation at the end of the row-column sort). Thread $i$ $(1\leq i \leq p)$ constructs a subtree consisting of all sort keys in the $i$-th block. When all the subtrees are constructed, they are merged into one tree such that the height of the resulting tree can be minimized. Since the fanouts of the root nodes of the subtrees can be much less than the max fanout, just linking the root nodes of the subtrees may increase the height of the whole tree unnecessarily. Hence we remove the root nodes of the subtrees, and build the top layers of the whole tree by linking the children of the root nodes of the subtrees. In this way the height of the whole tree can be minimized.

\begin{table}
\renewcommand{\tabcolsep}{1.5mm}
\centering
\caption{Statistics of six datasets, where k = thousand, M = million, B = byte, and b = bit.}
\label{tab-stats}       
\begin{tabular}{ccccccc}
\hline
 & INDBTAB & Human& Wikititle & ExURL & WikiURL & Part\\
 
\hline
database table size & 884MB & 5310MB & 623MB & 649MB &930MB & 116MB \\
index size&390MB&860MB&333MB&184MB&305MB&46MB\\
\# keys & 16,392k & 36,504k & 13,978k & 7,735k & 12,786k & 2,000k \\
min key length & 35B & 101B & 2B &9B &31B&21B\\
max key length & 35B & 101B & 252B & 512B & 281B & 52B\\
average key length &35B& 101B & 21.2B & 59.0B &50.0B &33.7B\\ \hline
\# full key bits & 280b & 808b & 2016b & 4096b & 2248b & 416b \\
\# distinction bits in keys & 56b & 303b & 888b & 2023b & 874b & 204b\\
\# variant bits in record IDs & 27b & 29b & 26b & 25b & 26b & 25b\\
compression ratio &5.00&2.67&2.27&2.02&2.57&2.04\\
full sort key size (unit: 8B)& 48B & 112B & 264B & 520B & 296B & 64B\\
compressed sort key size (unit: 8B)& 16B & 48B & 120B & 256B & 120B & 32B\\
sort key ratio & 3.00 & 2.33 & 2.20 & 2.03 & 2.47 & 2.00\\
word comparison ratio & 2.10 & 1.27 & 1.00 & 1.37 & 3.61 & 1.29\\
\hline
\end{tabular}
\end{table}

\section{Performance Evaluation}

\subsection{Experimental Settings}

We conduct experiments to measure the performance improvements due to our compressed key sort.
In the experiments we compare the compressed key sort against the full key sort with respect to the time for sorting and index building.
We use five real datasets and one TPC-H dataset:
a database table in SAP HANA that records items in sales documents (which we call INDBTAB), a complete EST (expressed sequence tag) database of 
Human Chromosome 14 from Genome Assembly Gold-standard Evaluations \cite{HUMAN},
Wikipedia titles \cite{WIKITITLE}, external links of DBpedia \cite{DBpedia}, Wikipedia links of DBpedia \cite{DBpedia}, and Part table (column \emph{name}) of TPC-H \cite{TPC}.

The computer used in our experiments is equipped with four Intel\textsuperscript{\textregistered} Xeon\textsuperscript{\textregistered} E7-8880 v4 (2.20GHz) processors, each of which contains 22 cores. The computer has 1TB DRAM memory. (Since we used no more than 16 cores in our experiments of parallelization, the experiments were done in a single processor.)

\subsection{Evaluation with Real and Synthetic Datasets}
\label{sec-eval}


Table~\ref{tab-stats} presents the basic statistics of the six datasets such as the sizes of each database table and its index tree as well as a few important characteristics and measurements relevant to our proposed scheme. 
The {\em full sort key} refers to the combination of an uncompressed key taken from a dataset and the corresponding record ID. The record ID is 8 bytes long, and either the whole or only the variant bits of a record ID can be used as part of a sort key. In the latter case, the variant bitmap in the DS-metadata should be expanded to include the variant bits of the record IDs. 
The {\em compressed sort key} consists of distinction bits in a key and variant bits in the corresponding record ID.
Table~\ref{tab-stats} also shows the number of keys in a dataset, the lengths of the shortest, average, and longest keys, the number of bits in a full key, the number of distinction bits in keys, the number of variant bits in record IDs,
the size of full sort keys (i.e., full key + record ID), the size of compressed sort keys (i.e., distinction bits in key + variant bits in record ID).
The length of a sort key - full or compressed - is presented in the unit of 8B because sort keys are stored in {\em words}.

The compression ratio and the sort key ratio are computed by the following formulas:
\begin{eqnarray*}
\text{compression ratio} &=& \text{\# full key bits / \# distinction bits in keys} \\
\text{sort key ratio} &=& \text{full sort key size / compressed sort key size.}
\end{eqnarray*}
The compression ratio of our key compression scheme is 2.76 on average for the six datasets. (The percentage of distinction bit positions in full index keys is 39.8\% on average.)
The sort key ratio is 2.34 on average for the six datasets.
(The word comparison ratio will be explained in the next subsection.)
In all the experiments, comparison of two sort keys is made in the unit of 8-byte words.
Thus, for example, if two sort keys are 24B long, at most three word-comparisons will be required.

\begin{figure}
\centering
\includegraphics[height=5cm]{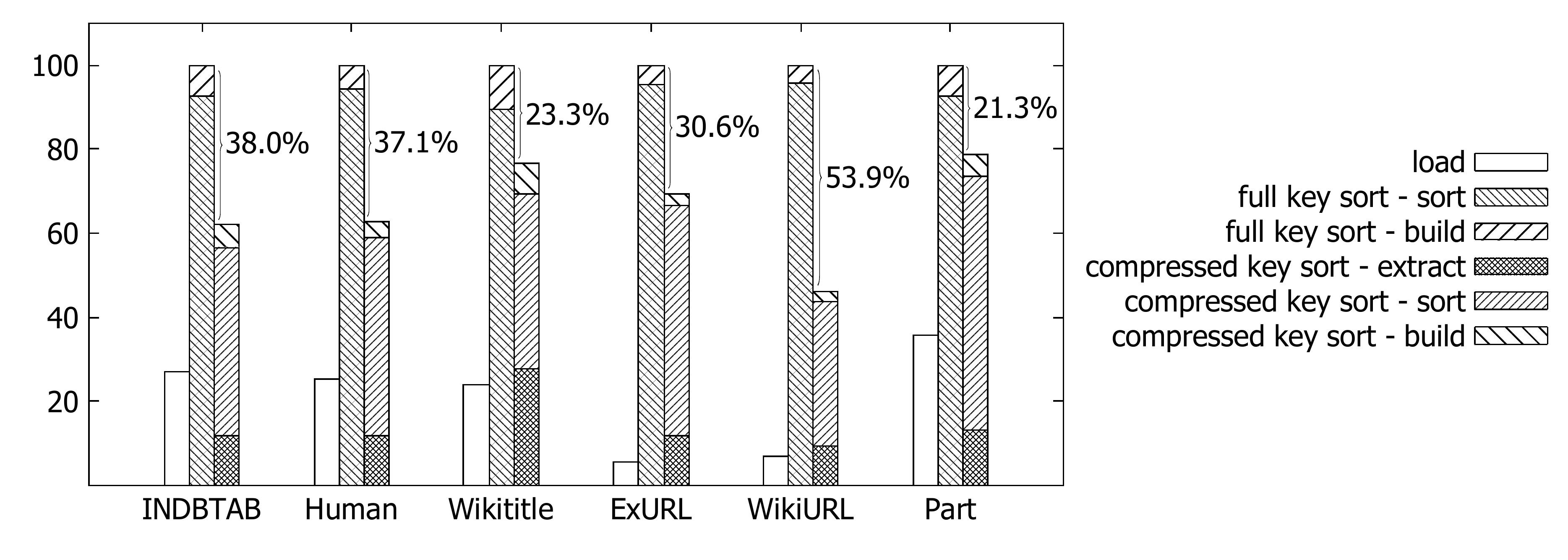}
\caption{Sorting time and index-building time of six datasets.}
\label{fig-bar}
\end{figure}

Figure~\ref{fig-bar} shows the performance results from all the six datasets. The execution times of the full key sort and the compressed key sort are summarized in three bars for each dataset in the figure.
The first bar (in the white color) of each group shows the time taken to load the key column of each database table from disk.
The load time is common to both sort methods and is included in the figure to show how the I/O cost is compared with the sort cost. The second and third bars of each group represent the total cost of building an index, excluding the load time, required by the full key sort and the compressed key sort, respectively. While the second bar is broken down to two phases, \emph{sort} and \emph{build}, the third bar is broken down to three phases, \emph{extract}, \emph{sort}, and \emph{build}. This is because the extract phase is needed only for the compressed key sort to obtain compressed keys by extracting bits from full keys in positions of 1s in the D-bitmap.
Despite the extra phase of bit extraction, however, as is shown clearly in Figure~\ref{fig-bar}, our compressed key sort reduced the total index building time substantially by expediting both the sort and build phases. The improvement ratio was 34.0\% on average for the six datasets. Note that all the measurements in the figure are normalized to the same scale for the ease of presentation and comparison.
The total time of building an index by the full key sort is 4.59, 21.12, 5.68, 17.63, 19.36, and 0.70 seconds for INDBTAB, Human, Wikititle, ExURL, WikiURL, and Part, respectively.

\begin{table}
\centering
\caption{Distribution of distinction bit positions of INDBTAB.}
\label{tab-dbit}
\begin{tabular}{c|c} 
\hline
bytes & distinction bit positions\\ 
\hline
1--8 &00000000 00000000 00000001 00000001 00000001 00001111 00000011 00001111 \\
9--16 &00000000 00000011 00001111 00000111 00001111 00000111 00001111 00000000 \\ 
17--24 &00001111 00001111 00001111 00001111 00001111 00001110 00000000 00000000 \\
25--32 &00000000 00000000 00000000 00000000 00000000 00000000 00000000 00000000 \\
33--35 &  \multicolumn{1}{l}{00000000 00000000 00000000}  \\ 
\hline
\end{tabular}
\end{table}

To better understand the performance differential, we looked into the distribution of distinction bit positions. Table~\ref{tab-dbit} shows the distinction bit positions of INDBTAB, where distinction bit positions are set by 1. As is shown in the table, the distinction bit positions are distributed widely over many bytes of full index keys, and extracting distinction bits into a compressed key can make it shorter and improve the performance of sorting and index building. In the case of full key sort, a single key comparison will have to examine up to 22 bytes of each key (i.e., sort by distinguishing prefixes), because the last distinction bit position is in the 22nd byte in the table. In the case of compressed key sort, however, a single key comparison can be done by examining no more than 7 bytes, because there are only 56 distinction bits, which can be stored in 7 bytes. Although a compressed sort key is actually 16 bytes long due to the 27 variant bits in the record IDs, a comparison of two compressed sort keys finishes in one word-comparison because all distinction bits belong to the first word of a compressed sort key.

\subsection{Sensitivity Analysis}

We conduct a sensitivity analysis to see how our sort key compression scheme performs under various circumstances.
The main parameters that affect the performance are the {\em sort key ratio} defined in the previous section and the \emph{word comparison ratio} defined as follows:
\[ \text{word comparison ratio} = \frac{wcc_{full}}{wcc_{comp}} \]
where $wcc_{full}$ is the average count of word comparisons required by a single full key comparison, and $wcc_{comp}$ is the average count of word comparisons required by a single compressed key comparison.



We used Zipf distribution~\cite{Zipf}
to generate synthetic datasets of various configurations.
Each dataset is generated by a custom function, denoted by
$\zipf(s,n,m)$, so that it contains 10 million keys of $n$ bytes each, the first $m$ bytes of each 8 byte word in a key have the same arbitrary ASCII value, and the remaining $8-m$ bytes of each word have lower case ASCII characters following the Zipf distribution $Zipf(s,26)$. The parameter $s$ is the value of the exponent characterizing the Zipf distribution.
For example, $\zipf(s, 16, 3)$ generates keys of type aaaZZZZZ aaaZZZZZ, where a is an arbitrary \textcolor{black}{fixed} character and Z is a byte having one of `a' to `z' by the Zipf distribution $(s,26)$.

\begin{table}
\renewcommand{\tabcolsep}{1.5mm}
\centering
\caption{Statistics of synthetic datasets.}
\label{tab-synthetic}       
\begin{tabular}{ccccccc}
\hline
 \multirow{2}{*}{data}&\multirow{2}{*}{function}&\multirow{2}{*}{key size}&full sort &compressed&sort key&word comparison  \\
&&&key size & sort key size&ratio&ratio \\
 
\hline
1&Zipf(2.5,48,0)&48B&56B&40B&1.40&1.30\\
2&Zipf(2.5,56,0)&56B&64B&40B&1.60&1.30\\
3&Zipf(2.5,64,0)&64B&72B&40B&1.80&1.30\\
4&Zipf(2.5,72,0)&72B&80B&40B&2.00&1.30\\
5&Zipf(2.5,80,0)&80B&88B&40B&2.20&1.30\\
6&Zipf(2.5,88,0)&88B&96B&40B&2.40&1.30\\
7&Zipf(2.5,96,0)&96B&104B&40B&2.60&1.30\\
8&Zipf(2.5,104,0)&104B&112B&40B&2.80&1.30\\
9&Zipf(2.5,112,0)&112B&120B&40B&3.00&1.30\\
\hline
10&Zipf(1.5,40,0)&40B&48B&24B&2.00&1.06\\
11&Zipf(1.5,40,1)&40B&48B&24B&2.00&1.11\\
12&Zipf(1.5,40,2)&40B&48B&24B&2.00&1.20\\
13&Zipf(1.5,40,3)&40B&48B&24B&2.00&1.34\\
14&Zipf(1.5,40,4)&40B&48B&24B&2.00&1.55\\
\hline
15&Zipf(1.5,64,0)&64B&72B&24B&3.00&1.05\\
16&Zipf(1.5,64,1)&64B&72B&24B&3.00&1.10\\
17&Zipf(1.5,64,2)&64B&72B&24B&3.00&1.19\\
18&Zipf(1.5,64,3)&64B&72B&24B&3.00&1.33\\
19&Zipf(1.5,64,4)&64B&72B&24B&3.00&1.53\\
20&Zipf(1.5,64,5)&64B&72B&24B&3.00&1.85\\
\hline
\end{tabular}
\end{table}

\begin{figure}
\centering
\subfigure{\includegraphics[height=5cm]{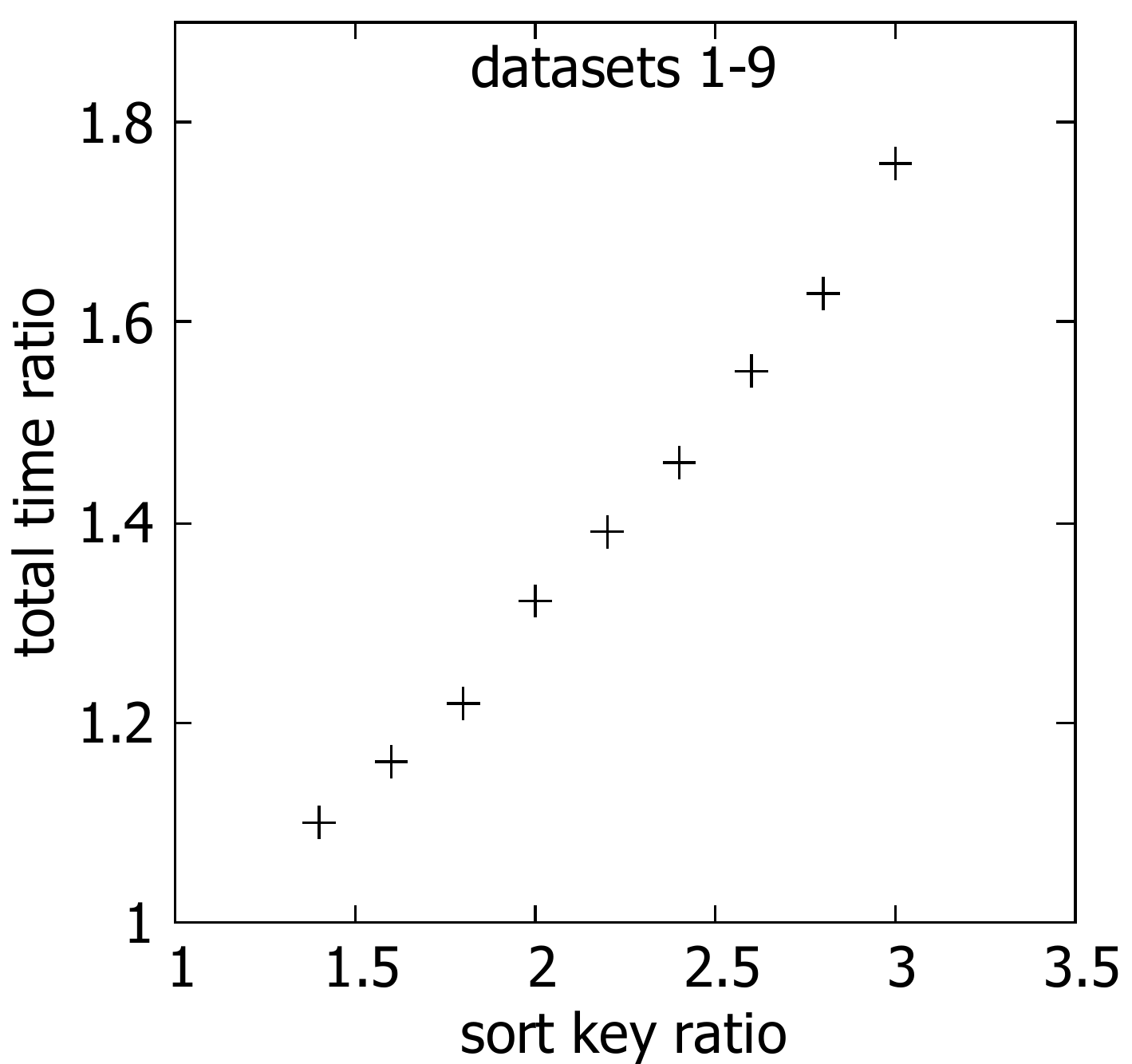}}
\subfigure{\includegraphics[height=5cm]{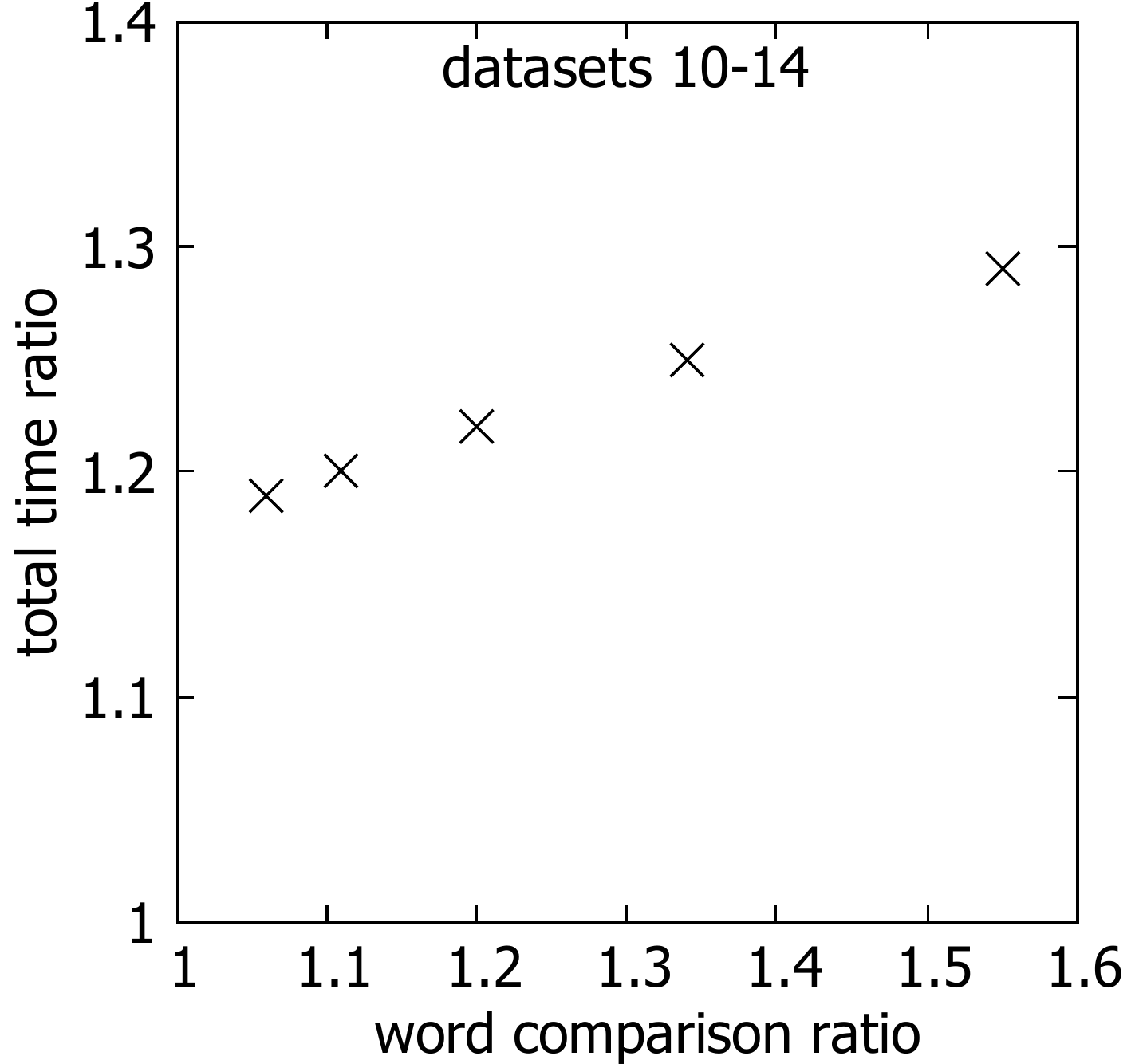}}
\subfigure{\includegraphics[height=5cm]{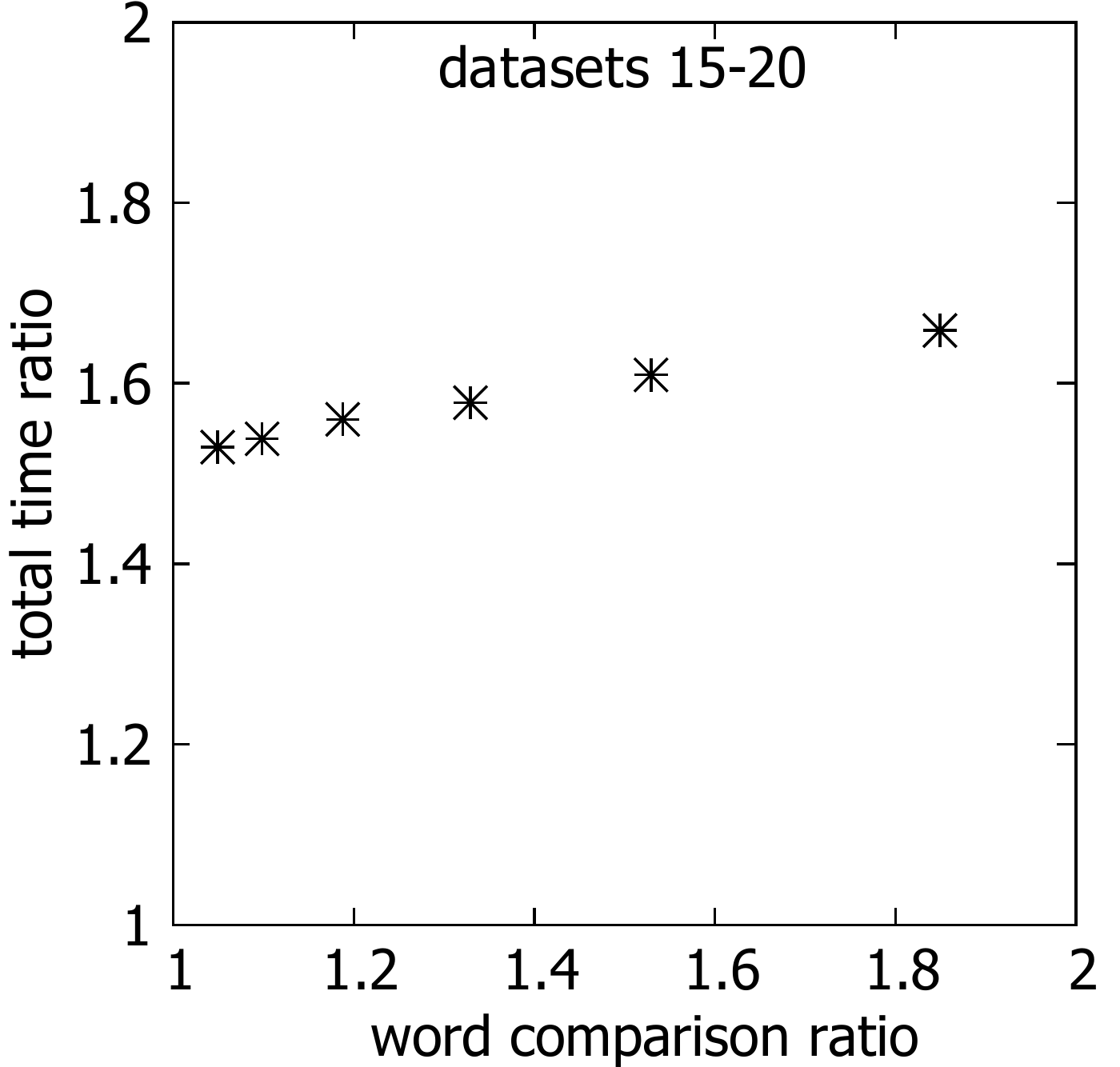}}
\caption{Total time ratio of synthetic datasets.}
\label{fig-synthetic}
\end{figure}

Table~\ref{tab-synthetic} shows the statistics of the synthetic datasets used in the sensitivity analysis. As can be seen in the table, the datasets are generated such that they conform to the default parameter settings: word comparison ratio = 1.30 and sort key ratio = 2.00 or 3.00.
In datasets 1-9, the word comparison ratio is fixed to 1.30 and the sort key ratio changes as the key size changes. In datasets 10-14 (resp. 15-20), the sort key ratio is fixed to 2.00 (resp. 3.00) and the word comparison ratio changes as $m$ changes in $\zipf(s,n,m)$. 

Figure~\ref{fig-synthetic} shows the index construction times by the compressed key sort in comparison with those by the full key sort. The measurements are the ratio of the former to the latter. So, the higher the ratio is, the larger the margin of improvement is by the compressed key sort. In datasets 1-9, as the sort key ratio 
increases, the advantage of our compressed key sort scheme grows proportionally. This is because the size of compressed sort keys gets smaller, which leads to a smaller amount of work in sorting and index building.

In datasets 10-14 (also in datasets 15-20) the number of distinction bits in each dataset is about the same, and thus the sort key ratios are identical. However, as $m$ in $\zipf(1.5, 40, m)$ increases, the distinction bits are more widely spread in full keys. Hence, our compression scheme has the effect of compacting widely spread distinction bits in full keys into compressed keys, which leads to a smaller number of word-comparisons in a comparison of two compressed keys. As $m$ increases, therefore, the word comparison ratio gets bigger, which results in an improvement especially in sorting time, even though the sort key ratios 
remain the same.
Therefore, our compression scheme improves the performance of index building in two ways:
\begin{enumerate}
\item by making compressed keys shorter than full keys, which leads to a smaller amount of work in sorting and index building, and
\item by compacting distinction bits in full keys into compressed keys, which leads to a smaller number of word-comparisons in a comparison of two compressed keys.
\end{enumerate}
For example, whereas the sort key ratio of WikiURL is smaller than that of INDBTAB (in Table~\ref{tab-stats}), the improvement of WikiURL is larger than that of INDBTAB (in Figure~\ref{fig-bar}) because the word comparison ratio of WikiURL is larger than that of INDBTAB.

Finally, we compare the sensitivity analysis to the experimental results with the six datasets with respect to (sort key ratio, word comparison ratio, total time ratio). 
INDBTAB has a characteristic (3.00, 2.10, 1.61), which is similar to (3.00, 1.85, 1.66) of dataset 20; (2.33, 1.27, 1.59) of Human is similar to (2.40, 1.30, 1.46) of dataset 6; (2.20, 1.00, 1.30) of Wikititle is similar to (2.00, 1.06, 1.19) of dataset 10; (2.03, 1.37, 1.44) of ExURL is similar to (2.00, 1.34, 1.25) of dataset 13.

\begin{table}
\centering
\caption{Sorting time and index-building time of INDBTAB (in seconds).}
\label{tab-indbtab}       
\begin{tabular}{c|ccc|cccc|cc|cc}
\hline
 & \multicolumn{3}{c|}{full key sort} & \multicolumn{4}{c|}{compressed key sort} & \multicolumn{2}{c|}{total time} & \multicolumn{2}{c}{speedup}\\
cores & sort & build & total & extract & sort & build & total & ratio& improve & full & comp\\
\hline
1 & 4.251 & 0.340 & 4.591 & 0.543 & 2.063 & 0.242 & 2.848 & 1.61 & 38.0\% & 1.0 & 1.0 \\
2 & 2.195 & 0.171 & 2.366 & 0.274 & 1.053 & 0.123 & 1.450 & 1.63 & 38.7\% & 1.9 & 2.0 \\
4 & 1.181 & 0.090 & 1.271 & 0.138 & 0.572 & 0.066 & 0.776 & 1.64 & 38.9\% & 3.6 & 3.7 \\
8 & 0.549 & 0.049 & 0.598 & 0.069 & 0.265 & 0.035 & 0.369 & 1.62 & 38.3\% & 7.7 & 7.7 \\
16 & 0.310 & 0.034 & 0.344 & 0.034 & 0.148 & 0.025 & 0.207 & 1.66 & 39.8\% & 13.3 & 13.8 \\

\hline
\end{tabular}
\end{table}

\begin{figure}
\centering
\includegraphics[height=5.0cm]{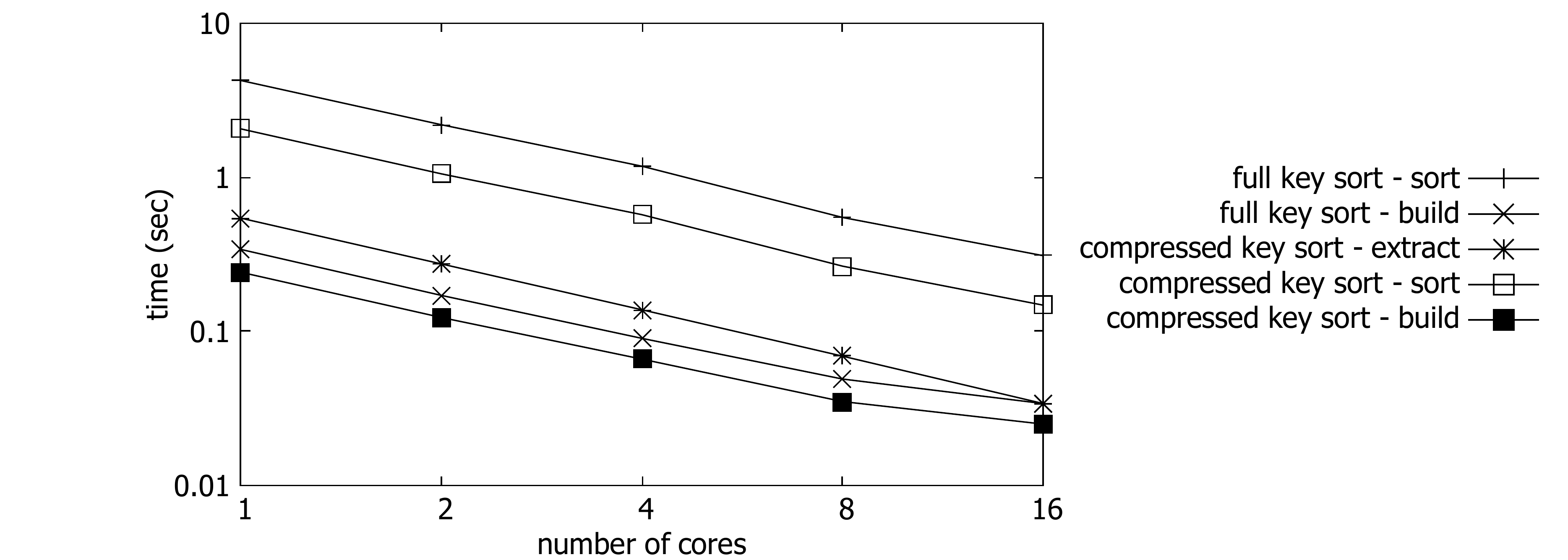}
\caption{Speedups of sorting time and index-building time.}  
\label{fig-indbtab}
\end{figure}

\subsection{Parallelization and Loading}

In Section~\ref{sec-eval}, we have shown that our compressed key sort can reduce times for both sort and index-build phases considerably. Nonetheless, the sort time still remains as the dominant portion of total index reconstruction time. In this section, we show that the sort time could be further reduced by parallelizing it on a multi-core computing platform. The choice of our sort algorithm was the {\em row-column sort}. Refer to~\ref{app-sort} for the detailed description of the algorithm.

Table~\ref{tab-indbtab} shows the detailed performance measurements from the full key sort and the compressed key sort with a varying number of cores used. Figure~\ref{fig-indbtab} presents the speedups observed in this experiment in the log-log scale. It clearly shows near-linear speedup in all measurements from both the full key sort and compressed key sort except for the index-building time with 16 cores, in which the speedup deteriorated because memory write (390MB in Table~\ref{tab-stats}) became dominant and was not accelerated by using multiple cores.
When sixteen cores were used, the speedups in the total index reconstruction time were 13.3 and 13.8 for the full key sort and the compressed key sort, respectively.

In the case of sixteen cores, the tree index of INDBTAB was reconstructed from the pre-loaded database table in just 0.207 seconds. Given that the memory footprint of the index tree of INDBTAB is 390MB (see Table~\ref{tab-stats}), this is approximately equivalent to 1.88 GB/sec bandwidth. This level of bandwidth is higher than the peak bandwidth of enterprise class magnetic disk drives and most commodity solid state drives. We present this result as an evidence that a tree index can be reconstructed from the pre-loaded database table on the fly much more efficiently than loading the materialized image from disk.



\section{Conclusion}

We have defined the notion of distinction bit positions and proved that the bit slices of index keys in distinction bit positions are sufficient information to determine the sorted order of the index keys correctly. Consequently, utilizing only those bit slices is in effect equivalent to compressing keys losslessly in regard to sorting the keys. The key compression ratio achieved by the proposed method was 2.76:1 on average in our experiment.

We have then proposed the compressed key sort based on the distinction bit positions in order to expedite the reconstruction of a tree index from the base table in memory. The compressed key sort reduced the index reconstruction time by 34.0\% on average in our experiment carried out on a single core platform. 
The proposed method based on distinction bit positions is essentially a lightweight key compression scheme. Thus it can be adopted in any application that involves sorting keys longer than the word size and is expected to deliver significant performance benefit.

\section*{Acknowledgements}

Ryu and Park were supported by Institute for Information \& communications Technology Promotion(IITP) grant funded by the Korea government(MSIT) (No. 2018-0-00551, Framework of Practical Algorithms for NP-hard Graph Problems). Moon was supported by a grant (K-16-L03-C01-S03) funded by the ministry of science, ICT, and future planning, Korea and PF Class Heterogeneous High Performance Computer Development (NRF-2016M3C4A7952587).

\bibliography{ref}   

%
%

\appendix

\renewcommand{\thefigure}{\arabic{figure}}
\renewcommand{\thetable}{\arabic{table}}
\section{Parallel Sorting}
\label{app-sort}

\begin{algorithm}[h]
\caption{Row-Column Sort}
\begin{algorithmic}[1]
\Procedure{Row\_Column\_Sort}{Key[1..$n$], $n$, $p$, $e$, $C$}
    \State $c \gets \lfloor C/e \rfloor$
    \State $t \gets \max(\lfloor \frac{\sqrt{n/c}}{p}\rfloor,1)$ 
    \For{$i \gets 1$ \textbf{to} $tp$}  
            \State init\_block$[i] \gets (\frac{n}{tp}(i-1)+1, \frac{n}{tp}i)$
            \State sorted\_block$[i] \gets (\frac{n}{tp}(i-1)+1, \frac{n}{tp}i)$ \Comment{in another array Temp}
            \For{$j \gets 1$ \textbf{to} $\frac{n}{tpc}$} 
                \State sub\_block$[i][j] \gets (\frac{n}{tp}(i-1) + c(j-1) + 1,\frac{n}{tp}(i-1) + cj)$
            \EndFor
    \EndFor
    
    \ForAll{thread $i \gets 1$ \textbf{to} $p$} \textbf{in parallel}
        \For{$j \gets 1$ \textbf{to} $t$}
            \For{$k \gets 1$ \textbf{to} $\frac{n}{tpc}$}
                \State basic\_sort(sub\_block[$t(i-1)+j$][$k$])
            \EndFor
            \State multiway\_merge($\frac{n}{tpc}$, sub\_block[$t(i-1)+j$][$1..\frac{n}{tpc}$], sorted\_block[$t(i-1)+j$]) 
        \EndFor
    \EndFor
    \ForAll{thread $i \gets 1$ \textbf{to} $p$} \textbf{in parallel}
        \State perfect\_partition(sorted\_block$[1..tp]$, split\_block$[1..tp][1..p]$)
    \EndFor
    \ForAll{thread $i \gets 1$ \textbf{to} $p$} \textbf{in parallel}
        \State final\_block$[i] \gets (\frac{n}{p}(i-1)+1,\frac{n}{p}i)$ 
        \State multiway\_merge($tp$, split\_block$[1..tp][i]$, final\_block[$i$]) 
    \EndFor

\EndProcedure
\end{algorithmic}
\end{algorithm}


We describe our parallel sorting algorithm, which we call the row-column sort. The row-column sort uses a notion of the \emph{perfect partition} in \cite{VSI,FMP}. A pair of a (full or compressed) key and its record ID will be called a sort key, which is an element in sorting. 
The input to the row-column sort is as follows:

\begin{enumerate}
\item[] Key$[1..n]$: array of sort keys
\item[] $n$: number of elements (i.e., sort keys)
\item[] $p$: number of threads
\item[] $e$: size of an element (in bytes)
\item[] $C$: last level cache size (in bytes) per thread (i.e., available L3 cache size / number of threads in our experiments)
\end{enumerate}

For the dataset of INDBTAB in Section 6, for instance, $n$ is 16.39 million, and $e$ is 48 bytes for full sort keys. Typically in our target applications, the size of sort keys is too big to exploit SIMD parallelism. Hence, the row-column sort does not rely on SIMD instructions, but it is a \emph{comparison sort} \cite{CLRS} (i.e., it relies on the operation of comparing two elements).
Algorithm 1 shows the pseudo-code of the row-column sort.
The details of the algorithm are as follows.

\begin{figure}
\centering
\includegraphics[height=7cm]{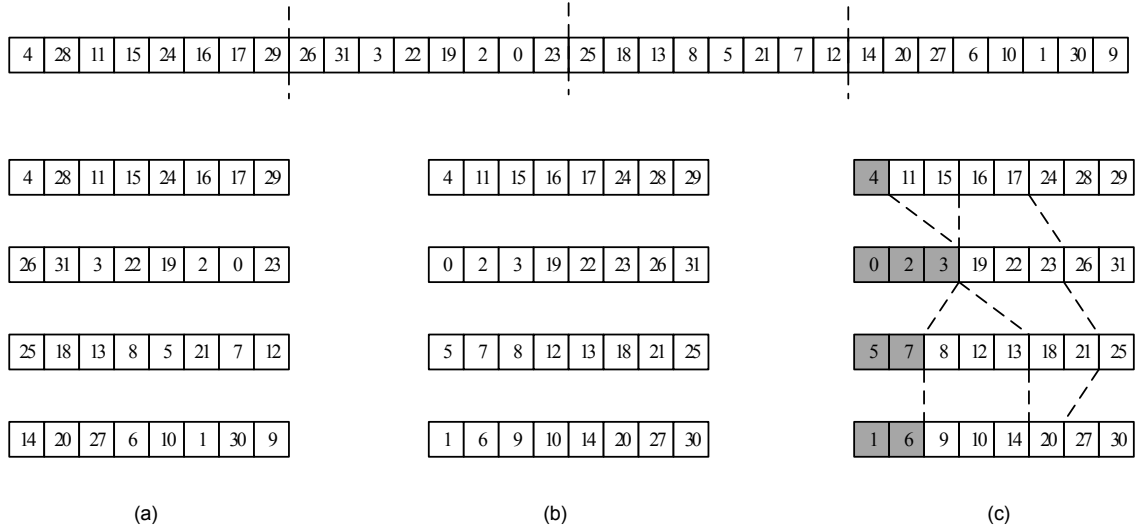}
\caption{Row-column sort. (a) init\_block$[1..tp]$, where $t=1$, $p=4$. (b) sorted\_block$[1..tp]$ (each block is sorted). (c) split\_block$[1..tp][1..p]$ (perfect split of sorted blocks).}
\label{fig-sort}
\end{figure}

\begin{enumerate}
\item[1.] (lines 2--3) Compute two parameters which are used in the algorithm: $c$ is the number of elements that can be included in $C$ bytes, and $t$ is set such that $\frac{n}{tpc} \approx tp$ in order to balance the workloads of line 13 and line 18.
In Figure~\ref{fig-sort}, $n=32$ and $p=4$. For simplicity, we assume in this toy example that $e=1$ and $c=2$. Then $t$ is set to 1.

\item[2.] The row-column sort uses two arrays Key$[1..n]$ and Temp$[1..n]$ which are partitioned into blocks: init\_block$[1..tp]$, sub\_block$[1..tp][1..\frac{n}{tpc}]$, and final\_block$[1..p]$ are blocks of array Key, and sorted\_block$[1..tp]$ and split\_block$[1..tp][1..p]$ are blocks of array Temp. See Figure~\ref{fig-sort}.
For simplicity of presentation, we assume that $\frac{n}{tp}$, $\frac{n}{tpc}$, and $\frac{n}{p}$ are integers.
In Algorithm 1, each block is represented by the first position and the last position in its array (but in actual implementation only one of the first and last positions is necessary because the whole array is partitioned into blocks without overlaps).
For example, init\_block[1] is represented by $(1, \frac{n}{tp})$.

\item[3.] (lines 9--13) Assign $t$ init\_blocks to each thread. Each thread sorts each of $t$ init\_blocks as follows. (Note that a block init\_block$[i]$ is partitioned into sub\_block$[i][1..\frac{n}{tpc}]$.)
\begin{enumerate}
\item[3.1.] 
Sort each sub\_block$[i][k]$ of Key$[1..n]$ by the following basic sort. The basic sort is essentially Quicksort with insertion sort as the recursion base. The Quicksort partitions around the median of the medians of three samples, each of three elements (also called the pseudo-median of 9 elements) \cite{BM}. This basic sort is fast when all the input elements are within the last level cache \cite{BM,LL}.

\item[3.2.] Each thread merges $\frac{n}{tpc}$ sub\_blocks into a sorted\_block 
by the multi-way merge (i.e., $\frac{n}{tpc}$-way merge) using a tournament tree \cite{Kn}.
(In multiway\_merge($x$, in\_block$[1..x]$, out\_block) of Algorithm 1, $x$ is the number of blocks to be merged, in\_block$[1..x]$ are the blocks to be merged, and out\_block is the merged block.)
\end{enumerate}

\item[4.] (lines 14--15) Compute the perfect $p$-partition of the $tp$ sorted\_blocks \cite{VSI,FMP}.
The \emph{perfect $p$-partition} of sorted\_blocks is defined as follows: Each sorted\_block is partitioned into $p$ split\_blocks (sizes of split\_blocks may vary and there can be even an empty split\_block as in the second sorted\_block of Figure~\ref{fig-sort} (c)) such that the collection of all the first split\_blocks constitutes the $\frac{n}{p}$ smallest ones of $n$ elements (gray elements in Figure~\ref{fig-sort} (c)), and the collection of all the second split\_blocks constitutes the next $\frac{n}{p}$ smallest ones, etc.

\item[5.] (lines 16--18) Thread $i$ ($1\leq i \leq p$) merges all the $i$-th split\_blocks of the perfect $p$-partition (i.e., split\_block$[1..tp][i]$) into a final\_block. Again we use the multi-way merge (i.e., $tp$-way merge) using a tournament tree.
\end{enumerate}

\begin{table}
\centering
\caption{GCC STL sort vs. row-column sort (in seconds).}
\label{tab-psort}
\begin{tabular}{c|ccccc} 
\hline
cores & 1 & 2 & 4 & 8 & 16\\ 
\hline
GCC STL sort & 4.016 & 2.727 & 1.380 & 0.775 & 0.452  \\
row-column sort & 4.251 & 2.195 & 1.181 & 0.549 & 0.310  \\
\hline
\end{tabular}

\end{table}

\begin{figure}
\centering
\includegraphics[height=5cm]{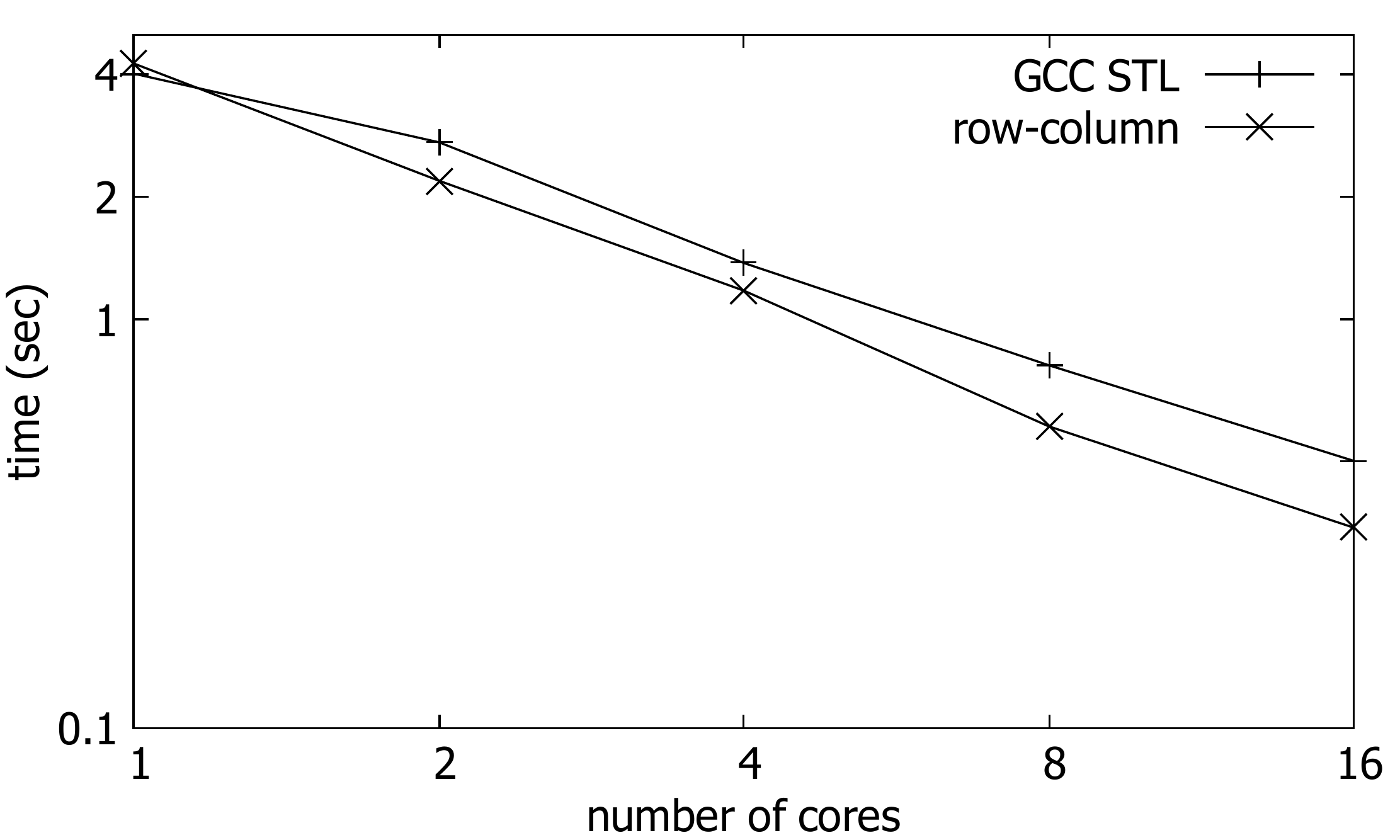}
\caption{Speedups of GCC STL sort and row-column sort.}
\label{fig-psort}
\end{figure}

The perfect $p$-partition in step 4 is computed as follows: An \emph{$x$-split} of the $tp$ sorted\_blocks is defined as a partition of each sorted\_block$[i]$ $(1\leq i \leq tp)$ into two disjoint subsets $L_i$ and $H_i$ such that
\begin{enumerate}
\item[(1)] any element in all $L_i$'s is less than or equal to any element in all $H_i$'s 
\item[(2)] the number of elements in all $L_i$'s is exactly $x$. 
\end{enumerate}
To find the perfect $p$-partition, each thread $i$ ($1\leq i \leq p-1$) computes an $i\times \frac{n}{p}$-split of the $tp$ sorted\_blocks. 
Then the $\frac{n}{p}\text{-split}, \frac{2n}{p}\text{-split}, \ldots, \frac{(p-1)n}{p}\text{-split}$
make the perfect $p$-partition. 
In Figure~\ref{fig-sort} (c), the $\frac{n}{p}\text{-split}$ has $L_1=\{4\}$, $L_2=\{0,2,3\}$, $L_3=\{5,7\}$, $L_4=\{1,6\}$, and we set split\_block$[i][1]=(\frac{n}{tp}(i-1)+1,\frac{n}{tp}(i-1)+|L_i|)$, i.e., split\_block$[1][1]=(1,1)$, split\_block$[2][1]=(9,11)$, split\_block$[3][1]=(17,18)$, split\_block$[4][1]=(25,26)$.
We use the algorithm due to Francis, Mathieson and Pannan \cite{FMP} to find an $x$-split. The algorithm in \cite{VSI} also computes an $x$-split, and one in \cite{SS} computes an approximate split.


For the performance of the row-column sort, we compared it with GCC STL parallel sort \cite{GCC}, which is also a comparison sort in which a custom comparator can be used.
Table~\ref{tab-psort} shows the sorting times of GCC STL sort and the row-column sort for full sort keys of INDBTAB, and Figure~\ref{fig-psort} shows the speedups of the two sorting algorithms with multiple cores.
The row-column sort shows a better speedup than GCC STL sort, and
it is 31.4\% faster than GCC STL sort when the number of cores is 16.

\end{document}